\def\ICNPShepherdVersion{1}
\documentclass[10pt,conference,letterpaper]{IEEEtran}
\IEEEoverridecommandlockouts

\usepackage[T1]{fontenc}
\usepackage{amssymb}
\usepackage{amsmath}
\usepackage{amsthm} 
\usepackage{algorithm}
\usepackage{algorithmicx}
\usepackage{algpseudocode}
\usepackage{comment}
\usepackage{graphicx}
\usepackage{braket}
\usepackage{tikz}
\usepackage[section]{placeins}

\usepackage{bbm}
\usepackage[caption=false,font=footnotesize]{subfig}
\usepackage{textcomp}
\usepackage[colorlinks=true, urlcolor=blue, linkcolor=blue]{hyperref}
\usepackage{cleveref}

\newtheorem{definition}{Definition}

\newtheorem{theorem}{Theorem}

\newtheorem{corollary}{Corollary}
\newtheorem{proposition}{Proposition}

\allowdisplaybreaks

\newcommand{\nop}[1]{}

\newcommand{\ExtendedMaterial}[1]{%
  \ifdefined\ICNPShepherdVersion
    Appendix~\ref{#1}%
  \else
    the extended version%
  \fi
}

\usetikzlibrary{positioning, decorations.pathreplacing}
\usetikzlibrary{backgrounds}

\tikzset{%
  zeroarrow/.style = {-stealth,dashed},
  onearrow/.style = {-stealth,solid},
  c/.style = {circle,draw,solid,minimum width=2em,
        minimum height=2em},
}

\title{Spatio-temporal Path Optimization for Stabilizer-Code-Protected Quantum Networks}

\ifdefined\ICNPShepherdVersion
\else
\IEEEoverridecommandlockouts\IEEEpubid{\makebox[\columnwidth]{979-8-3195-0662-7/26/\$31.00 $\copyright$2026 IEEE \hfill}\hspace{\columnsep}\makebox[\columnwidth]{ }}
\fi

\begin{document}


\author{
\IEEEauthorblockN{
Yuanbo Zhang\IEEEauthorrefmark{1},
Qianfan Wang\IEEEauthorrefmark{2},
Yangming Zhao\IEEEauthorrefmark{3},
Lin Chen\IEEEauthorrefmark{4},
Deke Guo\IEEEauthorrefmark{1}
}

\IEEEauthorblockA{
\IEEEauthorrefmark{1}School of Computer Science and Engineering, Sun Yat-sen University,\\[-1pt]
\texttt{zhangyb36@mail2.sysu.edu.cn}, \texttt{guodk@mail.sysu.edu.cn}\\
\IEEEauthorrefmark{2}Department of Computer Science, City University of Hong Kong, \texttt{qwang742@cityu.edu.hk}\\
\IEEEauthorrefmark{3}State Key Laboratory of Novel Software Technology, Nanjing University, \texttt{ymzhao@nju.edu.cn}\\
\IEEEauthorrefmark{3}School of Intelligent Software and Engineering, Nanjing University Suzhou Campus\\
\IEEEauthorrefmark{4}Engineering Research Centre of Applied Technology on Machine Translation and Artificial Intelligence,\\[-1pt]
Macao Polytechnic University, \texttt{lchen@mpu.edu.mo}
}
\thanks{
The work of Lin Chen was supported by Science and Technology Development Fund of Macau SAR under Grant 0119/2025/RIA2.
The work of Yangming Zhao was supported in part by the National Natural Science Foundation of China under Grants 62572233 and 62272428.
Corresponding authors: Lin Chen and Deke Guo.
Thanks Jifan Liang for assistance with the experimental data.
\ifdefined\ICNPShepherdVersion
This version is an extended version of the ICNP 2026 paper.
\else
An extended version with appendices is available on arXiv at \url{https://arxiv.org/abs/2608.22766}.
\fi
}
}

\maketitle
\begin{abstract} 

Quantum Error Correction~(QEC)-protected direct transmission is a fundamental approach to preserve fragile quantum states while they are physically forwarded across noisy quantum networks. 
When a logical qubit traverses multiple hops, selected QEC-capable nodes may recover the encoded state before it continues along the route. 
The feasibility and cost of the final transmission strategy therefore depend on how we jointly choose the path, the recovery locations, and the protection schemes.
In this paper, we formulate and analyze a cross-layer spatio-temporal path optimization problem for block-style stabilizer-code-protected direct transmission. 
Our main results include fixed-scheme and flexible-scheme single-flow routing algorithms, as well as a multi-flow routing algorithm. 
The framework developed in this paper can serve as an algorithmic building block for QEC-aware routing under logical-error and logical-lifetime constraints. 
Simulations show that it reduces single-flow average routing cost by approximately 25--30\% over Decode-Always and lowers multi-flow throughput-normalized congestion by approximately 28--31\% over Greedy-Assignment.

\end{abstract}


\section{Introduction}

Quantum networks aim to transmit and process quantum information across distributed nodes, enabling applications ranging from secure quantum communication and distributed quantum computing to networked quantum sensing~\cite{zhang2024quantumstack,main2025distributed,kim2024distributed}. 
Unlike classical information, quantum states cannot be copied directly and are vulnerable to channel noise, imperfect operations, and decoherence during multi-hop communication~\cite{kozlowski_towards_2019,cacciapuoti_quantum_2022}. 
Consequently, multi-hop quantum communication is not merely a connectivity problem: routing must specify what quantum object is carried across the network and how its state is protected while traversing noisy channels. 
This paper focuses on the direct-transmission paradigm, where quantum information itself is physically forwarded through noisy channels as encoded logical qubits.\footnote{Another mainstream paradigm is entanglement-based communication, which distributes entangled pairs and uses teleportation to transfer quantum states~\cite{bouwmeester1997experimental,chakraborty2020entanglement,zhao2022e2e,chen_optimum_2024}.}

Direct transmission exposes a concrete protection problem. To protect quantum states, data qubits can be encoded into logical qubits using QEC codes~\cite{nielsen_quantum_2010}. In QEC-protected direct transmission, a logical qubit is carried by physical qubits and forwarded over multiple physical channels. Selected QEC-capable nodes may perform a local recovery operation, namely decoding and re-encoding, to refresh the encoded state before it continues along the route. Such recovery choices, together with the choice of protection scheme, affect both feasibility and resource consumption. Therefore, computing a route for a logical qubit is not only a network-layer path-selection task, but also a joint decision on the physical path, recovery placement, and protection scheme.

%
%
The key challenge is that QEC recovery makes path computation segment-dependent rather than channel-additive.
In classical routing, a path is often evaluated by adding fixed link metrics such as cost or delay.
In QEC-protected direct transmission, however, a channel cannot be evaluated independently because recovery operations define its current segment.
Physical noise and propagation time accumulate until the next recovery operation, and recovery then resets the accumulated segment state at the price of additional QEC operation cost.
\footnote{One may ask how this differs from wireless relaying. Decode-and-forward re-encodes classical messages, while amplify-and-forward forwards noisy signals~\cite{laneman2004cooperative}; QEC recovery refreshes encoded quantum states and sets path segments for logical-error and lifetime checks.}
Thus, path computation must jointly optimize forwarding decisions, recovery placement, and protection-scheme assignment. This joint optimization makes the problem segment-dependent rather than a shortest-path computation over additive channel metrics.

%
%
Two concrete constraints arise.
First, a recovery-delimited segment's logical-error contribution cannot in general be represented by a sum of fixed channel weights. 
Instead, it applies a scheme-dependent nonlinear mapping to the physical noise accumulated since the last recovery. 
As a result, a locally cheap or reliable partial path may become suboptimal after considering where the next recovery operation is performed. 
Second, an encoded logical qubit remains reliable only for a limited lifetime. 
In this paper, the inter-recovery time refers to the elapsed propagation time between two consecutive recovery operations, and each such segment must stay within the logical-lifetime budget of the chosen QEC scheme. 
These two constraints interact directly: performing recovery more frequently shortens each segment and reduces accumulated noise and elapsed time, but increases QEC operation cost; performing recovery less frequently saves cost, but may violate the logical-error or lifetime constraints.

%
%
Current routing models miss this interaction. Classical routing uses additive link metrics and constraints, whereas entanglement-based routing concerns the generation, choice, and purification of entanglement resources~\cite{abane2024entanglement}. Existing studies of QEC-aware direct transmission often treat error correction as a coarse improvement in fidelity or resource allocation~\cite{hu2024qnrouting}. Thus, they model neither segment-dependent, non-additive logical error nor the joint placement of recovery operation, protection schemes, and the maximum travel time before the next recovery.

Motivated by this gap, we make three contributions:
\begin{itemize}
    \item We identify and formulate a QEC-aware spatio-temporal path optimization problem for block-style stabilizer-code-protected direct transmission. 
    Each encoded code block serves as a routing unit. We derive a routing-level model of logical-error evolution from the physical noise accumulated within each segment. The formulation enforces both end-to-end logical-error and inter-recovery logical-lifetime constraints.
%
%
    \item We analyze the computational structure of this problem and show that it is NP-hard even for a single request with a fixed protection scheme. We then develop a generic QEC-aware spatio-temporal state-space framework. It separates QEC-specific segment evaluation from graph-level label-correcting search. Its labels jointly track routing cost, accumulated logical error, and elapsed inter-recovery time under discretization and dominance pruning.
    \item We instantiate the framework in three settings: fixed-scheme routing, flexible protection-scheme assignment, and multi-flow load balancing. In the multi-flow setting, we generate feasible candidate strategies for each request and select among them using an ILP formulation.
\end{itemize}
  
More broadly, this work provides an algorithmic perspective for quantum path optimization under physical-layer protection constraints. 
Rather than treating QEC as a fixed link-quality improvement or a post-processing step, the proposed framework incorporates recovery operations, code-dependent logical-error behavior, and logical lifetime directly into path computation. 
This perspective can support more general QEC-aware routing designs for direct-transmission quantum networks, including networks with heterogeneous protection schemes, resource-aware recovery placement, and network-wide traffic engineering objectives.

\section{Preliminaries}
\label{sec:background}

In this section, we review the essential background of QEC that forms the basis of our network model. 

\subsection{Quantum Error Correction with Stabilizer Codes}

Quantum Error Correction (QEC) combats physical noise and decoherence that threaten quantum information. It protects logical information by encoding it in redundant quantum degrees of freedom~\cite{nielsen_quantum_2010}. 
Unlike classical bits, qubits cannot be copied to create redundancy due to the no-cloning theorem. 
Instead, a QEC code encodes the state of one or more logical qubits into a block of $n$ physical qubits, so that local physical errors can be detected and corrected without directly measuring the encoded logical state.

We use block-style stabilizer-code-based schemes as the main QEC abstraction because a block of physical qubits can be treated as one routing unit, matching our segment-based decode-and-re-encode model. The framework only requires each scheme to expose a few routing-relevant quantities, including its logical-error behavior, decoding/re-encoding cost, and logical decoherence budget; these quantities will be formalized as a scheme profile in Section~\ref{sec:model}. Non-block-style schemes with inter-block or non-local dependencies require an extended state model and are outside the present scope~\cite{grassl2007constructions, wilde2010entanglement}. A stabilizer code is specified by commuting Pauli stabilizer generators~\cite{gottesman1997stabilizer}, whose measurements produce an error syndrome used for decoding and recovery.

The error-correction capability of a stabilizer code is commonly characterized by its parameters $[[n,k,d_{\mathrm{code}}]]$:
\begin{itemize}
    \item \textbf{$n$}: the number of physical qubits in the code block;
    \item \textbf{$k$}: the number of logical qubits protected by the code. In our direct-transmission setting, we focus on $k=1$, i.e., each request carries one logical qubit. The extension to $k>1$ is conceptually straightforward: the $k$ logical qubits encoded in the same block can be treated as one routing unit, with the QoS and cost metrics defined at the block level.
    \item \textbf{$d_{\mathrm{code}}$}: the code distance. Under the standard adversarial error model, a code with distance $d_{\mathrm{code}}$ can correct up to $\lfloor (d_{\mathrm{code}}-1)/2 \rfloor$ arbitrary physical errors.
\end{itemize}

Stabilizer-code families offer different levels of protection capability, physical-qubit overhead, decoding complexity, and operational cost. Examples include surface codes~\cite{fowler_surface_2012} and Steane codes. For routing, we represent these trade-offs through a profile for each candidate protection scheme. This profile specifies the decoding/re-encoding cost. It also records logical-error behavior and logical decoherence time. The routing algorithm can thus compare schemes without modeling the internal details of each code family.

\subsection{Logical Error Rate and Decoherence Time}

The quality of a QEC-protected channel is characterized by logical error rate and logical decoherence time derived from the underlying code and physical noise model.

\subsubsection{Logical Error Rate}

For a given stabilizer code scheme $\sigma$, let $n$ denote the number of physical qubits used to encode one logical qubit. Let $p^{\text{phys}}$ denote the total physical error probability of a physical qubit, and let $p_\text{X}$, $p_\text{Y}$, and $p_\text{Z}$ denote the probabilities of Pauli X~(bit-flip), Z~(phase-flip), and Y~(both) errors, respectively, where $p^{\text{phys}}=p_\text{X}+p_\text{Y}+p_\text{Z}$. 

The logical error rate, $p^{\text{log}}$, is the probability that an uncorrectable error occurs on the logical qubit after a full QEC cycle (i.e., syndrome measurement, decoding, and correction). For a given physical error rate $p^{\text{phys}}$, the logical error rate is well-approximated by~\cite{forlivesi_logical_2024}:
\begin{equation}
\label{eq:logical-error-rate-of-composite-channel}
\begin{aligned}
p^\text{log}_\sigma(p^{\text{phys}}) = & \sum_{j = 0}^{n} \binom {n}{j}(1-p^{\text{phys}})^{n-j} \\ 
        & \cdot \sum _{i = 0}^{j}\binom {j}{i} \, p_{\text {Z}}^{i} \, \sum _{\ell = 0}^{j-i} \binom {j-i}{\ell }\, p_{\text {X}}^{\ell }\, p_{\text {Y}}^{j-i-\ell } f_{\sigma,j}(i,\ell),
\end{aligned}   
\end{equation}
where $j$ denotes the total number of erroneous physical qubits in the code block. Among these $j$ errors, $i$ are $Z$ errors, $\ell$ are $X$ errors, and the remaining $j-i-\ell$ are $Y$ errors. The coefficient $f_{\sigma,j}(i,\ell)$ is a code- and decoder-specific parameter, representing the fraction of uncorrectable error patterns with $i$ $Z$ errors, $\ell$ $X$ errors, and $j-i-\ell$ $Y$ errors under scheme $\sigma$.

\subsubsection{Logical Decoherence Time}
In a multi-hop quantum network, the state of a physical qubit is susceptible to errors induced by decoherence~\cite{schlosshauer_quantum_2019}. These errors are typically modeled by the Pauli operators X, Z, and Y. The rate at which these physical errors occur is characterized by the physical decoherence time, as well as the logical decoherence time. If the time between two decoding operations exceeds the logical decoherence time, the logical qubits suffer an uncorrectable logical error, which cannot be corrected via QEC codes~\cite{schlosshauer_quantum_2019}. To protect logical qubits, it is necessary to decode logical qubits within logical decoherence time~\cite{ryan-anderson_realization_2021}.


\section{System Model and Problem Formulation}
\label{sec:model}

In this section, we develop a novel spatio-temporal network model for QEC-protected quantum networks. 

\subsection{Network Model}

We model a quantum network as an undirected graph $G=(V,E)$.\footnote{We focus on undirected graphs mainly to keep the analysis concise; our results can be extended to directed graphs with straightforward adaptation.} Here, $V$ is the set of quantum nodes, and $E$ is the set of physical channels. 
The network transmits encoded qubits directly through physical channels. It does not first distribute entangled pairs and then teleport quantum states. 
Thus, while encoded logical qubits may contain intra-block quantum correlations, entanglement distribution is not used as a network-level communication primitive in our model.

\subsubsection{Node Model}

The network consists of three types of nodes, distinguished by their QEC capabilities:
\begin{itemize}
    \item \textbf{Ordinary Switches}: These nodes can only forward encoded qubits to the next hop but perform no QEC operations. Hereafter, we refer to them as switches;
    \item \textbf{Super Switches}: These are QEC-capable nodes, at which the routing strategy may perform a decoding-and-re-encoding operation. A decoding node $v \in V$ can perform a full decoding-and-re-encoding operation on a logical qubit;
    \item \textbf{User}: These nodes are request endpoints and are equipped with super switches. We treat user nodes as super switches in our analysis.
\end{itemize}
This heterogeneous model of switches, super switches and users is motivated by the significant resource imbalance between quantum forwarding and quantum error correction. 
Forwarding a logical qubit can be a relatively lightweight operation, while performing a full QEC cycle is an extremely demanding task. 
However, to obtain higher QoS, users are highly likely to be equipped with super switches. Given this consideration, we model the network as heterogeneous.

\subsubsection{Channel Model} Each physical channel $e=(u,v) \in E$ is characterized by three parameters:
\begin{itemize}
    \item An additive traversal cost $c_e$ over channel $e$, which can be instantiated as monetary cost or other resource consumption.
    \item A channel physical error probability $p_e$, which abstracts the probability that a single physical qubit experiences a non-identity physical error during one traversal of channel $e$. For Pauli noise, $p_e=p_{e,\text{X}}+p_{e,\text{Y}}+p_{e,\text{Z}}$, where $p_{e,\text{X}}$, $p_{e,\text{Y}}$, and $p_{e,\text{Z}}$ are the probabilities of X, Y, and Z errors over channel $e$, respectively.
    \item A propagation time $t_e$, the physical time-of-flight for a qubit traversing the channel.
\end{itemize}


\subsection{Strategy and State Evolution}

\subsubsection{Routing and protection strategy formulation}
A routing strategy for a given source-destination pair $(s,d)$ is formally defined as a triplet $\pi=(P,D,\Sigma)$, where:
\begin{itemize}
%
%
    \item $P=(v_0,e_1,v_1,\ldots,e_L,v_L)$ is a simple physical path in the graph $G$, where $v_0=s$, $v_L=d$, $L$ is the path length, $V(P)=\{v_0,\ldots,v_L\}$ is its node set, and $E(P)=\{e_1,\ldots,e_L\}$ is its channel set;
    \item $D \subseteq \{v \in V(P)\mid v \text{ is a super switch}\}$ is the set of chosen decoding locations along the path $P$. Nodes in $V(P)\setminus D$ are designated as forwarding-only;
    \item $\Sigma=\{\sigma_v \mid v\in D\}$ is the protection-scheme allocation, where $\sigma_v \in S_v$ denotes the protection scheme instance assigned for the (re-)encoding operation at node $v$, and $S_v$ is the set of candidate schemes available at node $v$.
\end{itemize}

For routing purposes, each candidate scheme $\sigma \in S_v$ is characterized by a routing-relevant scheme profile
%
%
\[
\phi_\sigma = (c_\sigma,\, T_\sigma^{\log},\, \mathcal F_\sigma),
\]
where $c_\sigma$ is the decoding/re-encoding cost, $T_\sigma^{\log}$ is the logical decoherence budget, and $\mathcal F_\sigma$ maps the accumulated physical noise on a segment to its logical-error probability.
Its Pauli-noise instantiation and the segment-level simplification
used in our evaluation are given by
Eqs.~\eqref{eq:p_comp}--\eqref{eq:simple-logical-error-rate-of-composite-channel}.
The routing algorithm does not directly model the internal operation of the code or decoder. Their effect is represented by $\mathcal F_\sigma$, so the abstraction is only as accurate as this mapping.

%
%
The profile above uses an idealized recovery model. It accounts for the operational cost of decoding and re-encoding but omits recovery faults and latency. In practice, gates, measurements, and decoding may introduce errors, while the recovery process itself takes time. The implementation also determines whether the logical state remains protected during recovery. 
These effects do not alter the structure of our routing framework. 
They can be incorporated by augmenting $\phi_\sigma$ with a scheme-specific recovery error and duration, which are then included in the error and time updates in the algorithm proposed later.

This profile-based recovery abstraction also defines the scope of the current model. It covers block-style stabilizer codes that protect one logical qubit and support node-local recovery. Our model does not directly cover entanglement-assisted stabilizer codes, convolutional stabilizer codes, or other stabilizer-based schemes whose operation fundamentally departs from this block-segment abstraction.

In this paper, we instantiate $S_v$ using surface-code-based schemes. However, the model itself is not restricted to surface codes. Any protection scheme that can provide the above routing-relevant profile can be incorporated into the same optimization framework by replacing the corresponding scheme-specific parameters. If $|S_v|=1$ for all $v$, we refer to this scenario as the fixed-scheme case; otherwise, we refer to it as the flexible-scheme case.

The total cost of a given strategy $\pi=(P,D,\Sigma)$ is the sum of link traversal costs and the operational costs of the assigned protection schemes:
\begin{equation}
C(\pi)=\sum_{e\in E(P)} c_e + \sum_{v\in D} c_{\sigma_v}.
\end{equation}

\subsubsection{Segment-Level Error Modeling}

A path segment $q$ is defined as the subsequence of physical channels traversed by a logical qubit between two consecutive decoding/re-encoding operations. Let $E(q)$ denote the set of physical channels contained in segment $q$. We summarize the accumulated segment-level physical noise by the composite physical error probability
\begin{equation}
    p^{\text{comp}}_q = 1 - \prod_{e \in E(q)} (1 - p_e).
    \label{eq:p_comp}
\end{equation}
%
%
Under independent channel errors, this quantity exactly gives the probability that at least one physical error occurs along the segment. 
Only when $S_q=\sum_{e\in E(q)}p_e\ll 1$ do we use it as a first-order surrogate for the net non-identity Pauli-error probability of the composed segment. 
The omitted multi-error cancellations, e.g., $X\cdot X=I$, contribute $O(S_q^2)$, as shown in \ExtendedMaterial{ap:first-order-composite-error}.
The approximation is not justified when $S_q$ is large.

%
%

To obtain a compact logical-error expression, we use an i.i.d. depolarizing-channel instantiation for the segment-level noise model. Specifically, for a composite physical error probability $p^{\text{comp}}_q$, a physical qubit remains unchanged with probability $1-p^{\text{comp}}_q$; conditioned on an error, it suffers an $X$, $Y$, or $Z$ error with equal probability. Equivalently,
\[
p_\text{X}=p_\text{Y}=p_\text{Z}=p^{\text{comp}}_q/3.
\]

%
%
This instantiation yields a simpler expression based on error weight. It is a reduced form of Eq.~\eqref{eq:logical-error-rate-of-composite-channel}. 
Let $n_\sigma$ be the number of physical qubits used by scheme $\sigma$. 
For weight $j$, $\bar f_{\sigma,j}$ denotes the fraction of Pauli error patterns that remain uncorrectable after decoding. 
This fraction covers all error locations and Pauli error types. 
The coefficient $\bar f_{\sigma,j}$ can be obtained through decoder-level simulation, experimental characterization, or numerical lookup tables. 
Then the logical error rate can be written as
\begin{equation}
\label{eq:simple-logical-error-rate-of-composite-channel}
p^\text{log}_{q,\sigma}(p^{\text{comp}}_q)
=
\sum_{j=0}^{n_\sigma}
\bar f_{\sigma,j}
\binom{n_\sigma}{j}
(1-p^{\text{comp}}_q)^{n_\sigma-j}
(p^{\text{comp}}_q)^j.
\end{equation}
%
%
Our simulations instantiate $\mathcal F_\sigma$ using Eq.~\eqref{eq:simple-logical-error-rate-of-composite-channel} for low-error recovery-delimited segments under i.i.d. depolarizing Pauli noise. 
This instantiation depends only on the Pauli error weight and averages over error locations and Pauli types. It therefore treats all error patterns of the same weight identically and cannot capture distribution-dependent logical behavior. Its i.i.d. assumption also excludes spatial and temporal correlations in practical surface-code channels.
Independent non-depolarizing Pauli noise can use Eq.~\eqref{eq:logical-error-rate-of-composite-channel}; loss, erasure, non-Pauli, or correlated noise requires a scheme-specific mapping $p^\text{log}_{q,\sigma}(\cdot)$ calibrated to the channel and decoder. 
Thus, long-distance repeater links dominated by loss or erasure need a different segment model. 
If the new mapping depends only on the accumulated segment noise represented in the current model, only $\mathcal F_\sigma$ needs to be replaced.
Otherwise, the required noise variables can be added to the labels in the generic state-space framework of Section~\ref{sec:single-flow}.

Before traversing segment $q$, suppose that a logical qubit has a cumulative logical error rate $r_{\text{prev}}$. 
let $p^{\text{log}}_{q,\sigma}$ denote the segment-level logical error rate induced by this segment under protection scheme $\sigma$ after the QEC cycle at the segment endpoint. 
Under the segment-level independence assumption, the cumulative logical error rate $r_{\text{new}}$ after traversing the segment and completing the QEC cycle is
\begin{equation}
    r_{\text{new}} = r_{\text{prev}} + (1 - r_{\text{prev}}) \cdot p^{\text{log}}_{q,\sigma}.
    \label{eq:r_new}
\end{equation}
%
%
Equation~\eqref{eq:r_new} counts two disjoint cases: a logical error exists before the segment, or the segment introduces one when there is no prior error. The update assumes independent segment errors and ignores cross-segment logical-Pauli cancellations.
Under $R=-\log(1-r)$, Eq.~\eqref{eq:r_new} becomes $R_{\mathrm{new}}=R_{\mathrm{prev}}-\log(1-p^{\log}_{q,\sigma})$ and is additive across recovery segments.
Let $a_e=-\log(1-p_e)$ and $A_q=\sum_{e\in E(q)}a_e$, so $p_q^{\mathrm{comp}}=1-e^{-A_q}$.
Writing Eq.~\eqref{eq:simple-logical-error-rate-of-composite-channel} as $F_\sigma$, a segment contributes $\Psi_\sigma(A_q)=-\log(1-F_\sigma(1-e^{-A_q}))$ to $R$.
In general, $\Psi_\sigma(A_1+A_2)\neq \Psi_\sigma(A_1)+\Psi_\sigma(A_2)$, so this contribution is not a sum of fixed physical-channel weights.
A detailed analysis in \ExtendedMaterial{ap:exact-logical-composition} bounds this recurrence error using the net segment-error input $p_{q,\sigma}^{\log,\mathrm{net}}$, which accounts for Pauli-error cancellations within the segment.
It also analyzes the upstream error from Eqs.~\eqref{eq:p_comp}--\eqref{eq:simple-logical-error-rate-of-composite-channel} and gives the low-error conditions for the full approximation chain. Here, $r_0$ is the initial post-encoding logical error rate, set to $0$ unless specified.

\subsection{Single-Flow Problem Formulation}
\label{sec:problem-formulation}
The objective is to compute a minimum-cost strategy $(P^*,D^*,\Sigma^*)$ among all feasible strategies. A strategy is feasible if it satisfies the following two constraints:

\begin{enumerate}
    \item \textbf{End-to-End Error Rate Constraint}: The final logical error rate at the destination node, denoted by $r$, must not exceed the application-specific threshold $r_\theta$, i.e., $r \le r_\theta$.
    
    \item \textbf{Logical Decoherence Time Constraint}: For every segment $q$, the segment travel time $t^{\text{seg}}_q$ must not exceed the logical decoherence budget of the protection scheme $\sigma$ used at the segment start:
    \begin{equation}
        t^{\text{seg}}_q < T^{\text{log}}_\sigma.
    \end{equation}
    Additional channel-level decoherence or storage-loss constraints can be incorporated by refining the segment-time budget, but are not explicitly modeled in this paper.
\end{enumerate}
This formulation defines the single-flow Spatio-Temporal Constrained Path Problem, a state-dependent extension of the classical Multi-Constrained Path Problem (MCPP).

%
%
Classical MCPP assigns each edge fixed additive weights for cost and constrained resources ~\cite{wang1995bandwidth,martins1984multicriteria}. 
Our problem can be cast as MCPP by treating each recovery-delimited segment and protection scheme as one edge in an expanded graph and omitting combinations that violate the lifetime bound.
On the physical graph, however, a segment's accumulated noise is transformed into logical error through a scheme-dependent nonlinear mapping.
Consequently, its logical-error contribution cannot generally be expressed as a sum of fixed weights assigned to individual physical channels.
Our framework avoids enumerating all such segments and scheme choices by retaining the segment state in QEC-aware label transitions to be introduced in the next section.

The following sections instantiate the framework for one fixed scheme (Section~\ref{sec:single-flow}), flexible schemes (Section~\ref{sec:multi-encoding-scheme}), and multi-flow candidate generation (Section~\ref{sec:multi-flows}).

\section{Generic State-Space Framework for Single-Flow Computation}
\label{sec:single-flow}
In this section, we first develop the generic state-space framework for the fixed-scheme single-flow setting. 
In this base case, the protection scheme is fixed for all recovery nodes, so the index $\sigma$ denoting the encoding instance will be omitted, and we use $p^{\text{log}}_{q}, T^{\text{log}}$ instead.
To characterize the intrinsic difficulty of our problem, we first study its decision version. The following theorem shows that even deciding whether a feasible strategy exists is computationally intractable.


\begin{theorem}
\label{th:NP-hard}
The corresponding optimization problem of finding a minimum-cost
feasible strategy is NP-hard.
\end{theorem}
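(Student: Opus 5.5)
We reduce from a classical NP-hard constrained-path problem. The natural candidates are the Restricted Shortest Path (RSP) problem and the two-constraint path feasibility problem. The sentence preceding the theorem says that even deciding whether a feasible strategy exists is intractable, so we target the latter. We plan to reduce from the Multi-Constrained Path decision problem with two additive weights, known NP-complete (Wang--Crowcroft~\cite{wang1995bandwidth}). The input is a graph $H$, nonnegative integer weights $w_1(e),w_2(e)$, bounds $W_1,W_2$, and endpoints $s,d$; the question is whether some simple $s$--$d$ path has $\sum w_1\le W_1$ and $\sum w_2\le W_2$. Because the gadget keeps every node except $s,d$ an ordinary switch, a yes/no feasibility answer for our problem is obtained in polynomial time from an optimal-cost algorithm (infinite/undefined optimum iff infeasible). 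NP-hardness of the optimization version then follows. Alternatively, one can reduce RSP directly using cost $c_e$ for the objective.

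\textbf{Construction.} We take $G=H$ with a single fixed scheme ($|S_v|=1$). Only $s$ and $d$ are super switches (users); all other nodes are ordinary switches. The whole path is then one recovery-delimited segment, provided that $D\subseteq\{s,d\}$ and decoding at $d$ does not split the route. The lifetime constraint encodes the first weight: set $t_e=w_1(e)$ and choose $T^{\log}$ so that $t^{\mathrm{seg}}_q<T^{\log}$ iff $\sum t_e\le W_1$; for instance, $T^{\log}=W_1+\tfrac12$, using integrality. The error constraint encodes the second weight. Set $a_e=-\log(1-p_e)=\epsilon\, w_2(e)$ for a small rational scale $\epsilon$, so $A_q=\epsilon\sum w_2$. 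With $r_0=0$ and a single segment, the final error is $r=F(1-e^{-A_q})$. Here $F$ is Eq.~\eqref{eq:simple-logical-error-rate-of-composite-channel} for a fixed code with coefficients making $F$ strictly increasing on $[0,1)$; for example $\bar f_j=1$ for $j\ge1$, which gives $F(p)=1-(1-p)^n$ and $r=1-e^{-nA_q}$. Monotonicity gives $r\le r_\theta$ iff $\sum w_2\le W_2$, with $r_\theta := 1-e^{-n\epsilon W_2}$. Costs are arbitrary, e.g.\ $c_e=1$. A feasible strategy then exists iff $H$ has a path meeting both bounds.

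\textbf{Main obstacle.} The difficulty is making the reduction polynomial in bit size. The quantities $p_e=1-e^{-\epsilon w_2(e)}$ and $r_\theta$ are irrational. To handle this, we will exploit the model's freedom to choose $\mathcal F_\sigma$, so that the threshold comparison becomes equivalent to an integer comparison. Alternatively, we can reparametrize $p_e$ rationally: pick rational $p_e$ with $-\log(1-p_e)$ approximating $\epsilon w_2(e)$ within an error below $\epsilon/(2|E|)$. Using integrality of weights, we set the threshold halfway between $W_2$ and $W_2+1$ so the approximation errors cannot flip any comparison. The number of bits needed is polynomial. A cleaner route, if permitted, is to reduce from RSP: use $c_e$ as the objective weight and $t_e$ with $T^{\log}$ as the constraint, and make the error constraint vacuous ($p_e=0$). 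This avoids numerics entirely and shows the optimization is NP-hard by itself. We would present this RSP reduction as the main proof, and the two-constraint feasibility reduction as the argument for the decision claim.
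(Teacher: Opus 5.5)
Your main argument is essentially the paper's proof. Both reduce from RSP/SWCPP: you use a single scheme, make only $s$ and $d$ super switches, set $t_e=w_e$ and $p_e=0$ so the error constraint is vacuous, and put $T^{\log}$ just above $W$ so that integrality turns the strict lifetime inequality into $\sum_e w_e\le W$; the paper takes $T_\sigma^{\log}=W+1$ and also sets $c_\sigma=0$ so that strategy cost equals path cost, a detail you should state explicitly. Your extra two-weight feasibility reduction proves more than the paper does (its decision version keeps the cost budget $B$) and is not needed for the theorem as stated.
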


%
%
The proof of Theorem~\ref{th:NP-hard}, given in \ExtendedMaterial{ap:proof-of-theorem-1}, sets all physical error rates to zero. 
Thus, provided that zero physical noise yields zero logical error, the NP-hardness result is independent of the first-order approximation. 
The reduction from the Shortest Weight-Constrained Path Problem (SWCPP) therefore locates the source of NP-hardness in constrained path selection rather than in logical-error evolution. This result of worst case leaves open pseudo-polynomial approaches for suitably bounded instances.

\textbf{Framework Overview.}
%
%
The framework consists of three parts. 
First, state expansion maps node-local forwarding and recovery choices into an auxiliary graph $H=(V_H,E_H)$. 
Second, a QEC-aware label-transition procedure extends a partial label across one auxiliary channel by updating traversal cost, segment noise, elapsed time, and, at recovery nodes, the cumulative logical error. 
Third, a dominance-based label-correcting search keeps only Pareto-efficient labels in the dimensions of cost, logical error, and elapsed inter-recovery time. 
The fixed-scheme case below uses $H=G'$, while the flexible-scheme instantiation in Section~\ref{sec:multi-encoding-scheme} uses $H=G''$.

\textbf{Construction of the Auxiliary Graph.}
We construct the auxiliary graph $G'$ from the original graph $G$ via state-space expansion. The key idea is to convert the node-local operation choice, i.e., forwarding or decoding, into a path choice in the auxiliary graph. In this way, routing and decoding-placement decisions can be handled in a unified manner.

For each original node $v$, let $\mathcal{A}(v)$ denote its corresponding auxiliary node set:
\[
\mathcal{A}(v)=
\begin{cases}
    \{v_{\text{fo}}\}, & \text{if } v \text{ is an ordinary switch},\\
    \{v_{\text{fo}},v_{\text{de}}\}, & \text{if } v \text{ is an intermediate super switch},\\
    \{d_{\text{de}}\}, & \text{if } v=d.
\end{cases}
\]
Here, $v_{\text{fo}}$ represents forwarding the logical qubit without decoding, while $v_{\text{de}}$ represents decoding and re-encoding at super switch $v$. For each original channel $(u,v)\in E$, we add an auxiliary channel $(u',v')$ for every $u'\in\mathcal{A}(u)$ and every $v'\in\mathcal{A}(v)$, with the same traversal cost $c_{u,v}$.

An example of the auxiliary graph $G'$ is shown in Fig.~\ref{fig:graph-model}\subref{fig:extend-graph}. The path $(s_{\text{fo}},u_{\text{fo}},v_{\text{de}},d_{\text{de}})$ represents a strategy that forwards the logical qubit at super switch $u$, and performs decoding at $v$ and $d$.

\begin{figure}[tbp]
    \centering

    \tikzset{
        graphTikz/.style={
            auto
        },
        graphNode/.style={
            draw,
            circle,
            minimum size=0.7cm,
            inner sep=0pt,
            text height=1.5ex,
            text depth=.25ex
        },
        superSwitch/.style={
            graphNode,
            fill=orange
        },
        graphEdge/.style={
            draw
        }
    }

    \subfloat[Original graph $G$.]{%
        \begin{tikzpicture}[graphTikz]
            \node[graphNode]   (a) at (0,0) {$s$};
            \node[superSwitch] (b) at (2,0) {$u$};
            \node[superSwitch] (c) at (4,0) {$v$};
            \node[superSwitch] (d) at (6,0) {$d$};

            \begin{scope}[on background layer]
                \draw[graphEdge] (a) -- (b);
                \draw[graphEdge] (b) -- (c);
                \draw[graphEdge] (c) -- (d);
            \end{scope}
        \end{tikzpicture}
        \label{fig:original-graph}
    }

    \subfloat[Auxiliary graph for single-scheme assignment ($G'$).]{%
        \begin{tikzpicture}[graphTikz]
            \node[graphNode]   (a)  at (0,0)    {$s_{\text{fo}}$};
            \node[superSwitch] (b)  at (2,0)    {$u_{\text{de}}$};
            \node[graphNode]   (bp) at (2,1.25) {$u_{\text{fo}}$};
            \node[superSwitch] (c)  at (4,0)    {$v_{\text{de}}$};
            \node[graphNode]   (cp) at (4,1.25) {$v_{\text{fo}}$};
            \node[superSwitch] (d)  at (6,0)    {$d_{\text{de}}$};

            \begin{scope}[on background layer]
                \draw[graphEdge] (a)  -- (b);
                \draw[graphEdge] (a)  -- (bp);
                \draw[graphEdge] (b)  -- (c);
                \draw[graphEdge] (b)  -- (cp);
                \draw[graphEdge] (bp) -- (c);
                \draw[graphEdge] (bp) -- (cp);
                \draw[graphEdge] (c)  -- (d);
                \draw[graphEdge] (cp) -- (d);
            \end{scope}
        \end{tikzpicture}
        \label{fig:extend-graph}
    }

    \subfloat[Further auxiliary graph for multi-scheme assignment ($G''$).]{%
        \begin{tikzpicture}[graphTikz]
            \node[graphNode]   (a)  at (0,0)     {$s_{\text{fo}}$};
            \node[superSwitch] (b1) at (2,0)     {$u_{\text{de}}^1$};
            \node[superSwitch] (b2) at (2,-1.25) {$u_{\text{de}}^2$};
            \node[graphNode]   (bp) at (2,1.25)  {$u_{\text{fo}}$};

            \node[superSwitch] (c1) at (4,0)     {$v_{\text{de}}^1$};
            \node[superSwitch] (c2) at (4,-1.25) {$v_{\text{de}}^2$};
            \node[graphNode]   (cp) at (4,1.25)  {$v_{\text{fo}}$};

            \node[superSwitch] (d1) at (6,0)     {$d_{\text{de}}^1$};
            \node[superSwitch] (d2) at (6,-1.25) {$d_{\text{de}}^2$};

            \begin{scope}[on background layer]
                \draw[graphEdge] (a)  -- (b1);
                \draw[graphEdge] (a)  -- (b2);
                \draw[graphEdge] (a)  -- (bp);

                \draw[graphEdge] (b1) -- (c1);
                \draw[graphEdge] (b1) -- (c2);
                \draw[graphEdge] (b1) -- (cp);

                \draw[graphEdge] (b2) -- (c1);
                \draw[graphEdge] (b2) -- (c2);
                \draw[graphEdge] (b2) -- (cp);

                \draw[graphEdge] (bp) -- (c1);
                \draw[graphEdge] (bp) -- (c2);
                \draw[graphEdge] (bp) -- (cp);

                \draw[graphEdge] (c1) -- (d1);
                \draw[graphEdge] (c2) -- (d1);
                \draw[graphEdge] (cp) -- (d1);

                \draw[graphEdge] (c1) -- (d2);
                \draw[graphEdge] (c2) -- (d2);
                \draw[graphEdge] (cp) -- (d2);
            \end{scope}
        \end{tikzpicture}
        \label{fig:extend-graph-2}
    }

    \caption{An example of transforming the original graph $G$ into the auxiliary graphs $G'$ and $G''$. White nodes represent forwarding-only switches, and orange nodes represent decoding-capable super switches.}
    \label{fig:graph-model}
\end{figure}

\begin{algorithm}[tbp]
\caption{Generic Framework for QEC-Aware Label Correction}
\label{alg:mincost}
\begin{algorithmic}[1]
%
%
    \State \textbf{Input:}auxiliary state-space graph $H=(V_H,E_H)$, source $s$, destination $d$, and discretization steps $\Delta_r,\Delta_t$
    \State \textbf{Output:} a minimum-cost feasible strategy, if one exists

    \For{each node $v\in V_H$}
        \State ${\cal L}_v \leftarrow \emptyset$
    \EndFor
    \State ${\cal L}_s \leftarrow \{(0,\lfloor r_0/\Delta_r\rfloor\Delta_r,0,\emptyset,\emptyset)\}$
    \State $Q \leftarrow \{s\}$

    \While{$Q\neq \emptyset$}
        \State $u \leftarrow \textsc{Dequeue}(Q)$
        \For{each channel $e=(u,v)\in E_H$}
            \For{each label $l\in{\cal L}_u$}
                \State $l_{\text{new}} \leftarrow \textsc{ExtendLabel}(l,e,v)$
                \If{$l_{\text{new}}$ is feasible}
                    \If{\textsc{InsertAndPrune}$(l_{\text{new}},{\cal L}_v)$} 
                        \State \textsc{Enqueue}$(Q,v)$ if $v\notin Q$
                    \EndIf
                \EndIf
            \EndFor
        \EndFor
    \EndWhile

    \If{${\cal L}_d=\emptyset$}
        \State \Return no feasible strategy
    \Else
        \State choose the minimum-cost label in ${\cal L}_d$
        \State reconstruct the path and operations by backtracking predecessors
        \State \Return the corresponding strategy $(\hat P,\hat D,\hat\Sigma)$
    \EndIf
\end{algorithmic}
\end{algorithm}

\begin{algorithm}[tbp]
\caption{\textsc{ExtendLabel}$(l,e,v)$}
\label{alg:extend-label}
\begin{algorithmic}[1]
    \State \textbf{Input:} label $l=(c,\hat r,\hat t,q,\text{pred})$, channel $e=(u,v)$
    \State $q' \leftarrow q\cup\{p_e\}$
    \State $\hat t' \leftarrow \left\lfloor(\hat t+t_e)/\Delta_t\right\rfloor\Delta_t$

    \If{$v$ is a forwarding node or an ordinary switch}
        \State \Return $(c+c_e,\hat r,\hat t',q',u)$
    \EndIf

    \If{$v$ is a decoding node using scheme $\sigma$}
        \If{$\hat t' \ge T^{\text{log}}_\sigma$}
            \State \Return infeasible
        \EndIf
        \State compute $p^{\text{comp}}_{q'}$ by Eq.~\eqref{eq:p_comp}
        \State compute $p^{\text{log}}_{q',\sigma}$ by Eq.~\eqref{eq:simple-logical-error-rate-of-composite-channel}
        \State $r_{\text{new}} \leftarrow \hat r+(1-\hat r)p^{\text{log}}_{q',\sigma}$
        \State $\hat r' \leftarrow \left\lfloor r_{\text{new}}/\Delta_r\right\rfloor\Delta_r$
        \If{$\hat r' > r_\theta$}
            \State \Return infeasible
        \EndIf
        \State \Return $(c+c_e+c_\sigma,\hat r',0,\emptyset,u)$
    \EndIf
\end{algorithmic}
\end{algorithm}

\textbf{Data structure}. For each node $v\in V_H$ in the auxiliary graph, we maintain a label set ${\cal L}_v$. Each label $l\in{\cal L}_v$ represents a partial strategy from the source $s$ to node $v$. Specifically, a label is a quintuple
$l \triangleq (c,\hat r,\hat t,q,v_1)$,
where:
\begin{itemize}
    \item $c$ is the cost accumulated along the partial strategy represented by $l$;
    \item $\hat r$ is the discretized cumulative logical error rate of the logical qubit at node $v$. Given the actual logical error rate $r$, we define
    $\hat r\triangleq \left\lfloor\frac{r}{\Delta_r}\right\rfloor\Delta_r$,
    where $\Delta_r$ is the discretization step size for the error-rate dimension;
    \item $\hat{t}$ is the discretized elapsed time since the last decoding operation. Given the actual elapsed time $t$, we define
    $\hat{t}\triangleq \left\lfloor\frac{t}{\Delta_t}\right\rfloor\Delta_t$,
    where $\Delta_t$ is the time discretization step size;
    \item $q$ is the current segment-noise accumulator. It records the physical error probabilities of the channels traversed since the last decoding operation. It is used to compute the composite physical error probability $p_q^{\text{comp}}$;
    \item $v_1$ is the predecessor of $v$ in the auxiliary graph and is used for path reconstruction.
\end{itemize}
Given two labels $l_i=(c_i,\hat r_i,\hat{t}_i,*,*)$, $i=1,2$, we say that $l_1$ dominates $l_2$ if $c_1\le c_2$, $\hat r_1\le \hat r_2$, and $\hat{t}_1\le \hat{t}_2$. The
dominance is strict if at least one inequality holds strictly. Intuitively, a dominated label cannot lead to a better feasible continuation than the label dominating it. Therefore, dominated labels can be safely removed from ${\cal L}_v$. 

\begin{algorithm}[tbp]
\caption{\textsc{InsertAndPrune}$(l,{\cal L}_v)$}
\label{alg:insert-prune}
\begin{algorithmic}[1]
    \If{$l$ is dominated by some label in ${\cal L}_v$} \label{line:relax}
        \State \Return false
    \EndIf
    \State remove all labels in ${\cal L}_v$ dominated by $l$
    \State add $l$ to ${\cal L}_v$
    \State \Return true
\end{algorithmic}
\end{algorithm}

\textbf{Algorithm}. Algorithm~\ref{alg:mincost} follows a label-correcting search over the auxiliary graph $H$. For each node $v\in V_H$, the algorithm maintains a set of non-dominated labels ${\cal L}_v$, where each label represents a partial strategy from the source $s$ to $v$. Starting from the initial label at $s$, the algorithm repeatedly dequeues a node $u$ and relaxes each outgoing channel $e=(u,v)$ using all labels in ${\cal L}_u$.

%
%
Each channel relaxation invokes the QEC-aware label-transition procedure \textsc{ExtendLabel} in Algorithm~\ref{alg:extend-label}. 
If $v$ is a forwarding node, the label is extended by accumulating the traversal cost, segment time, and model-specific noise state. 
If $v$ is a decoding node, the procedure evaluates $p^{\log}_{q,\sigma}$ from the segment-noise state and updates the cumulative logical error by Eq.~\eqref{eq:r_new}. 
The label is retained only if both the end-to-end error-rate constraint and the logical decoherence-time constraint are satisfied; after decoding, the segment time and segment-noise accumulator are reset.

Algorithm~\ref{alg:insert-prune} uses \textsc{InsertAndPrune} to maintain each label set under the dominance rule. 
It first rejects any feasible label that is already dominated by an existing label at the same auxiliary node. 
Otherwise, it inserts the new label and removes all existing labels dominated by it. 
In this way, each label set stores only Pareto-efficient partial strategies in terms of cost, logical error rate, and elapsed time since the last decoding operation. 
When the search terminates, Algorithm~\ref{alg:mincost} returns the minimum-cost label in ${\cal L}_d$ and reconstructs the corresponding strategy by backtracking predecessors.

%
%
Since the logical error rate and the elapsed segment time are discretized with step sizes $\Delta_r$ and $\Delta_t$, respectively, the number of non-dominated labels stored at each node is bounded by the number of discretized error-time states. Therefore, each label set has size at most $O(\Delta_r^{-1}\Delta_t^{-1})$ within the feasible range. Hereafter, $c(P)$ is the auxiliary-path cost and $r(P)$ is the routing-layer error computed by Eqs.~\eqref{eq:p_comp}--\eqref{eq:r_new}. The guarantee below bounds discretization error relative to this model, not physical-model mismatch.

\begin{definition} 
%
%
    For a segment $q$, let $T^{\text{log}}_q$ be the logical decoherence time of its assigned protection scheme. An $s$-$d$ path $P$ is called an $(\epsilon_r,\epsilon_t)$-optimal feasible path if $c(P) \leq c(P^*)$, $r(P)\le(1+\epsilon_r)r_\theta$, and $t^{\text{seg}}_q(P)\le(1+\epsilon_t)T^{\text{log}}_{q}$ for every segment $q$ of $P$.
\end{definition}

\begin{theorem}
    Let $T^{\text{log}}$ be the logical decoherence time of a path traversing all nodes in $V_H$. If $\Delta_r \leq \epsilon_r \frac{r_\theta}{|V_H|}$ and $\Delta_t \leq \epsilon_t \frac{T^{\text{log}}}{|V_H|}$, where $\Delta_r$ and $\Delta_t$ are asymptotically $O(\epsilon_r/|V_H|)$ and $O(\epsilon_t/|V_H|)$, respectively, then with input $H=(V_H,E_H)$, Algorithm~\ref{alg:mincost} outputs an $(\epsilon_r,\epsilon_t)$-optimal feasible path.
\label{th:mincost}
\end{theorem}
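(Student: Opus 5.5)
We argue by comparing the labels that Algorithm~\ref{alg:mincost} generates along an optimal strategy with that strategy's exact (undiscretized) trajectory. Fix an optimal feasible strategy $(P^*,D^*,\Sigma^*)$, viewed as an $s$-$d$ path in $H$. It visits at most $|V_H|$ auxiliary nodes, so it has at most $|V_H|$ channel relaxations and at most $|V_H|$ decoding events.

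\textbf{Step 1: floors never overestimate.} Every rounding in \textsc{ExtendLabel} is a floor, and the maps involved are monotone nondecreasing: $\hat r\mapsto \hat r+(1-\hat r)p$ for $p\le 1$, and $\hat t\mapsto \hat t+t_e$. By induction along $P^*$, the label obtained by extending $P^*$ without pruning therefore has discretized values $\hat r\le r$ and $\hat t\le t$ at every prefix, where $r,t$ are the exact values. Hence no feasibility test in \textsc{ExtendLabel} rejects this trajectory: $\hat t'<T^{\log}_\sigma$ and $\hat r'\le r_\theta$ follow from the exact constraints. Its cost is exactly $c(P^*)$, since cost is not discretized.

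\textbf{Step 2: dominance preserves a good witness.} By induction on the prefix length of $P^*$, we show that at termination each node $v_i$ of $P^*$ holds a label whose cost, $\hat r$ and $\hat t$ are no larger than those of the Step-1 witness label. The key point is that any label dominating the witness extends, along the next channel, to a label that dominates (or equals) the extended witness. This uses the same monotonicity together with monotonicity of $p^{\log}$ in the segment noise. The dominance rule compares only $(c,\hat r,\hat t)$ and not the accumulator $q$, so the dominating label may carry a larger accumulated segment noise. We will handle this gap by arguing that the search pruning is valid with respect to the discretized state, i.e.\ we treat $q$ as determined up to the rounding granularity. This is the step where the argument is most delicate, and we expect it to be the main obstacle. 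Label-correcting termination is standard, because label sets are finite (the bound $O(\Delta_r^{-1}\Delta_t^{-1})$ from the preceding discussion), so the final ${\cal L}_d$ contains a label with cost at most $c(P^*)$. The returned path $\hat P$ thus satisfies $c(\hat P)\le c(P^*)$.

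\textbf{Step 3: bounding the true constraints of the returned path.} For $\hat P$, each floor loses less than one step. An error update is $r\mapsto r+(1-r)p$, which has derivative at most $1$ in $r$, so rounding losses do not amplify. After at most $|V_H|$ decoding events, the true error satisfies $r(\hat P)\le \hat r+|V_H|\Delta_r\le r_\theta+\epsilon_r r_\theta$. Similarly, each segment has at most $|V_H|$ time roundings, so $t^{\text{seg}}_q\le \hat t+|V_H|\Delta_t< T^{\log}_q+\epsilon_t T^{\log}$. Here $T^{\log}$ in the hypothesis is the relevant budget (in the fixed-scheme case this is exactly $T^{\log}_q$). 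Together these give the $(\epsilon_r,\epsilon_t)$-optimality.
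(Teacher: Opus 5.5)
Your skeleton is the paper's: cost via dominance pruning, then error and time via per-step floor losses over at most $|V_H|$ roundings. Your Steps~1 and~3 are sound. They are in fact more careful than the paper, which never checks that the $P^*$ label survives the floored feasibility tests and never notes that the slope $1-p\le 1$ of $r\mapsto r+(1-r)p$ keeps rounding losses from amplifying.

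The genuine gap is in Step~2, exactly where you placed it, and your proposed patch does not close it. The accumulator $q$ is never discretized, and it is not determined by $(\hat r,\hat t)$ because $p_e$ and $t_e$ are independent channel parameters. So ``treating $q$ as determined up to the rounding granularity'' has no content. Worse, dominance on $(c,\hat r,\hat t)$ alone is not preserved across a decoding node, and Algorithm~\ref{alg:mincost} as written can fail outright. For example:
\begin{itemize}
\item Take users $s,d$, ordinary switches $a,u$, and channels $(s,a),(a,u),(s,u),(u,d)$ with unit times and unit costs except $c_{su}=1/2$.
\item Set all $p_e=0$ except $p_{su}=1/2$, set $T^{\log}=10$, and use one scheme with $\bar f_{\sigma,0}=0$ and $F_\sigma(1/2)>r_\theta+\Delta_r$.
\item The direct label at $u_{\text{fo}}$ (cost $1/2$, $\hat r=0$, $\hat t=\lfloor 1/\Delta_t\rfloor\Delta_t$) is stored when $s$ is expanded.
\item The noise-free label arriving later via $a$ (cost $2$, $\hat r=0$, no smaller $\hat t$) is therefore rejected in any queue order.
\item The only extension into $d_{\text{de}}$ then has $\hat r'>F_\sigma(1/2)-\Delta_r>r_\theta$, so ${\cal L}_d=\emptyset$ and the algorithm reports infeasibility.
\end{itemize}
Yet $s,a,u,d$ with decoding only at $d$ has logical error $0$ and segment time $3<T^{\log}$.

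The paper's own proof has the same hole: it simply asserts that pruning a dominated label ``cannot remove all minimum-cost feasible continuations.'' The fix therefore has to change the algorithm, not just the argument:
\begin{itemize}
\item Add the accumulated segment noise, e.g.\ the paper's $A_q=\sum_{e\in E(q)}a_e$ (equivalently $p^{\text{comp}}_q$), as a fourth dominance coordinate.
\item Make explicit the monotonicity of $F_\sigma$ that you invoke without justification. By the Bernstein form of $F_\sigma'$, it suffices that $\bar f_{\sigma,j}$ be nondecreasing in $j$.
\end{itemize}
Then a dominating label also yields no larger $\hat r'$ at the next decoding node, your induction goes through, and Step~3 is unchanged. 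The price is that the $O(\Delta_r^{-1}\Delta_t^{-1})$ label bound and the stated complexity no longer hold unless $A_q$ is also rounded down onto a grid. That rounding adds an error term, controllable through $L_\sigma$, which must be absorbed into the $\epsilon_r$ slack.

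Separately, neither your argument nor the paper's ensures that the returned walk in $H$ projects to a simple path in $G$, as the strategy definition requires. A detour $u\to w\to u$ through a decoding node $w$ resets $\hat t$ and can survive pruning.
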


%
%
The proof is in \ExtendedMaterial{ap:proof-mincost}.
The output of our framework allows for relaxed constraints and does not yield an exact cost-approximation ratio.
The same theoretical guarantees apply to a higher-order or exact error model if it uses the existing label state, supports incremental updates, preserves dominance, and has the same asymptotic extension cost. If additional state is required, the label bound and complexity must be re-derived.

\textbf{Space and time complexity of Algorithm~\ref{alg:mincost}.}
Let $T^{\text{log}}_{\max}=\max_{\sigma}T^{\text{log}}_{\sigma}$ and let $N_L$ be the maximum number of labels stored at one node. 
Labels are indexed by discretized logical error rate and elapsed segment time. 
After dominance pruning, $N_L=O\!\left(\frac{r_\theta-r_0}{\Delta_r}\cdot\frac{T^{\text{log}}_{\max}}{\Delta_t}\right)=O(|V_H|^2\epsilon_r^{-1}\epsilon_t^{-1})$.
Here, $\Delta_r=O(\epsilon_r/|V_H|)$ and $\Delta_t=O(\epsilon_t/|V_H|)$. 
Hence, the space complexity of generic framework is $O(|V_H|^3\epsilon_r^{-1}\epsilon_t^{-1})$. The running time is $O(|V_H|^3|E_H|\epsilon_r^{-1}\epsilon_t^{-1})$. 
For the fixed-scheme instantiation $H=G'$, where $|V'|=O(|V|)$ and $|E'|=O(|E|)$, these become $O(|V|^3\epsilon_r^{-1}\epsilon_t^{-1})$ and $O(|V|^3|E|\epsilon_r^{-1}\epsilon_t^{-1})$, respectively.

\section{Flexible-Scheme Single-Flow Instantiation}
\label{sec:multi-encoding-scheme}

We next instantiate the generic framework for the flexible-scheme single-flow variant of the base problem formulated in Section~\ref{sec:problem-formulation}. In this variant, each decoding-capable super switch can choose one protection scheme from its candidate set. Therefore, the algorithm must jointly optimize the path $P$, the decoding placement $D$, and the scheme assignment $\Sigma$. Compared with the fixed-scheme instantiation, the label-correcting core and QEC-aware label-transition procedure remain unchanged; only the state expansion is refined to encode scheme choices.

For clarity of presentation, we first assume that all super switches are equipped with the same set of $b$ candidate protection schemes. The framework naturally extends to the heterogeneous case where different super switches have different candidate sets, by adapting the graph transformation locally for each node.

To solve this problem, we extend the auxiliary-graph construction from $G'$ to a scheme-expanded graph $G''$. For each super switch $v$, we create one forwarding node $v_{\text{fo}}$ and $b$ decoding nodes $v_{\text{de}}^1,\ldots,v_{\text{de}}^b$, where $v_{\text{de}}^i$ represents decoding and re-encoding under protection scheme $\sigma_i$. For each ordinary switch $v$, we create one forwarding node $v_{\text{fo}}$.

For channel reconstruction, let $\mathcal A(v)$ denote the set of auxiliary nodes corresponding to an original node $v$:
\[
\mathcal A(v)=
\begin{cases}
\{v_{\text{fo}}\}, & \text{if } v \text{ is an ordinary switch},\\
\{v_{\text{fo}},v_{\text{de}}^1,\ldots,v_{\text{de}}^b\}, & \text{if } v \text{ is a super switch},\\
\{d_{\text{de}}^1,\ldots,d_{\text{de}}^b\}, & \text{if } v=d.
\end{cases}
\]
For each original channel $(u,v)\in E$, we add an auxiliary channel $(u',v')$ for every $u'\in\mathcal A(u)$ and $v'\in\mathcal A(v)$, with the same traversal cost $c_{u,v}$.

An example of $G''$ is shown in Fig.~\ref{fig:graph-model}\subref{fig:extend-graph-2}, where each super switch has two candidate schemes. For example, the path $(s_{\text{fo}},u_{\text{fo}},v_{\text{de}}^1,d_{\text{de}}^2)$ represents a strategy that forwards the logical qubit at $u$, decodes and re-encodes it at $v$ using scheme $\sigma_1$, and finally decodes it at $d$ using scheme $\sigma_2$.

The joint optimization over routing, decoding placement, and scheme assignment is then obtained by running Algorithm~\ref{alg:mincost} with $H=G''$.

\textbf{Space and time complexity.}
The complexity follows directly from the generic framework with $H=G''$. If each super switch has $b$ candidate schemes, then $|V''|=O(b|V|)$ and $|E''|=O(b^2|E|)$ in the worst case. Therefore, the total space complexity is $O(b^3|V|^3\epsilon_r^{-1}\epsilon_t^{-1})$, and the running time is $O(b^5|V|^3|E|\epsilon_r^{-1}\epsilon_t^{-1})$.

\section{Multi-Flow Network-Wide Instantiation}
\label{sec:multi-flows}

The formulation in Section~\ref{sec:problem-formulation} focuses on a single communication request. 
We now use the generic framework for candidate generation in the multi-flow setting. 
Multiple requests share the same physical network and may compete for channel resources. 
Let $\mathcal{K}$ denote the set of flows, and let $(s_i,d_i)$ be the source-destination pair of flow $i\in\mathcal{K}$. Unlike the single-flow case, where the objective is to minimize the cost of one feasible strategy, the multi-flow setting must also account for network-wide congestion.

We adopt a two-stage approach:
\begin{itemize}
    \item First, we invoke Algorithm~\ref{alg:mincost} with a relaxed dominance rule to generate a pool of candidate feasible strategies for each flow.
    \item Second, we select one candidate strategy for each flow so as to minimize the maximum congestion level over all channels.
\end{itemize}

To generate candidate strategies, we modify the dominance rule in Algorithm~\ref{alg:insert-prune}. Instead of keeping only non-dominated labels, we allow a new label to be inserted if it is dominated by fewer than $M$ existing labels in ${\cal L}_v$. This maintains up to $M$ representative candidates for each discretized state. At the destination, the algorithm returns up to $M$ minimum-cost feasible strategies for each flow; if fewer than $M$ feasible strategies exist, all of them are returned.

Let $\Pi_i$ denote the candidate strategy pool of flow $i$, and let $a_i$ denote its throughput demand. For a strategy $\pi\in\Pi_i$, let $E(\pi)$ denote the set of physical channels used by its path. We formulate the following path-allocation ILP:
\begin{align*}
    \min \quad & y \\
    \text{s.t.}\quad
        & \sum_{\pi\in \Pi_i} x_{i,\pi}=1, && \forall i\in\mathcal{K},\\
        & \sum_{i\in\mathcal{K}}\sum_{\pi\in\Pi_i:\, e\in E(\pi)} a_i x_{i,\pi}\le y, && \forall e\in E,\\
        & x_{i,\pi}\in\{0,1\}, && \forall i\in\mathcal{K},\ \pi\in\Pi_i.
\end{align*}
%
%
The first constraint ensures that each flow selects exactly one candidate strategy. The second constraint upper-bounds the total throughput routed over each physical channel by the maximum congestion variable $y$. The objective therefore minimizes bottleneck congestion.

We solve the LP relaxation and independently select $\pi\in\Pi_i$ for each flow with probability $x^*_{i,\pi}$.
Let $y_{\mathrm{OPT}}^{\Pi}$ denote the optimal integral congestion of the above allocation problem for the fixed candidate pools $\{\Pi_i\}$.

\begin{theorem}
%
%
Assume that throughput demands are normalized so that $0\le a_i\le 1$, and let $n=|V|$. There exists a constant $\kappa>0$ such that, with probability at least $1-1/n$, randomized rounding achieves an $O(\log n/\log\log n)$ approximation to $y_{\mathrm{OPT}}^{\Pi}$.
\end{theorem}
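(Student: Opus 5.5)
This is the classical Raghavan--Thompson randomized-rounding argument for min-congestion routing, adapted to fixed candidate pools. Let $(x^*,y^*)$ be an optimal solution of the LP relaxation. Since the relaxation contains every integral allocation, $y^*\le y_{\mathrm{OPT}}^{\Pi}$. It therefore suffices to show that the rounded congestion is $O(\log n/\log\log n)\cdot\max\{y^*,1\}$ with probability at least $1-1/n$. We also have $y_{\mathrm{OPT}}^{\Pi}\ge \max_i a_i$, since every flow must use at least one channel. If we additionally normalize so that $\max_i a_i=1$, or assume $y^*\ge 1$ as is standard, then the additive $1$ is absorbed into the ratio. I would state this normalization explicitly, because it is where the hypothesis $0\le a_i\le 1$ is used.

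\textbf{Step 1: per-channel load.} Fix a channel $e\in E$ and define the independent random variables
\[
Z_{i,e}=a_i\,\mathbbm{1}\bigl[\text{flow } i \text{ selects some } \pi\ni e\bigr].
\]
The rounding picks exactly one $\pi\in\Pi_i$ per flow, so $\Pr[Z_{i,e}=a_i]=\sum_{\pi\in\Pi_i:e\in E(\pi)}x^*_{i,\pi}$. Each $Z_{i,e}$ lies in $[0,1]$. The $Z_{i,e}$ are independent across flows $i$ because rounding is independent per flow. The load is $Y_e=\sum_i Z_{i,e}$, and by the LP constraint $\mathbb E[Y_e]\le y^*\le \mu:=\max\{y^*,1\}$.

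\textbf{Step 2: Chernoff tail.} Apply the multiplicative Chernoff bound for sums of independent $[0,1]$ variables, using the mean upper bound $\mu$:
\[
\Pr[Y_e\ge (1+\delta)\mu]\le \left(\frac{e^{\delta}}{(1+\delta)^{1+\delta}}\right)^{\mu}\le \left(\frac{e}{1+\delta}\right)^{(1+\delta)}.
\]
The second inequality uses $\mu\ge1$. Choose $1+\delta=\kappa\log n/\log\log n$ with a suitable constant $\kappa$. A standard calculation then gives $(e/(1+\delta))^{1+\delta}\le n^{-3}$ for $n$ large enough, since $(1+\delta)\log(1+\delta)\approx \kappa\log n$. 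This $\kappa$ is the constant asserted in the theorem.

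\textbf{Step 3: union bound.} The graph is simple, so $|E|\le n^2$. A union bound over channels gives failure probability at most $n^2\cdot n^{-3}=1/n$. Hence with probability at least $1-1/n$,
\[
\max_e Y_e\le \kappa\frac{\log n}{\log\log n}\,\mu=O\!\left(\frac{\log n}{\log\log n}\right)y_{\mathrm{OPT}}^{\Pi}.
\]

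The main obstacle is the regime $y^*<1$, where the Chernoff bound must be applied with the padded mean $\mu=1$. Handling it requires the argument that $y_{\mathrm{OPT}}^{\Pi}$ is bounded below by a constant. One option is to use $y_{\mathrm{OPT}}^{\Pi}\ge\max_i a_i$ together with a rescaling that makes $\max_i a_i=1$. The other is to state the guarantee in additive form, $O(\log n/\log\log n)(y^*+1)$. Everything else is routine. I would also check that the Chernoff calculation holds for small $n$, either by enlarging $\kappa$ or by noting that the bound is trivial for bounded $n$.
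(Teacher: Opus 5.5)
Your proposal is correct and follows the paper's proof essentially step for step: an optimal LP solution with $y^*\le y_{\mathrm{OPT}}^{\Pi}$, independent per-flow rounding, a per-channel Chernoff tail of $(e/\lambda)^{\lambda}\le n^{-3}$ with $\lambda=\kappa\log n/\log\log n$, and a union bound over $|E|\le n^2$ channels. The one difference is in your favor: the paper simply assumes $y_{\mathrm{OPT}}^{\Pi}\ge 1$ partway through, whereas you pad the mean to $\max\{y^*,1\}$ and use the scale-invariant normalization $\max_i a_i=1$, so that $y_{\mathrm{OPT}}^{\Pi}\ge\max_i a_i=1$, which actually justifies that step.
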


%
%
The theorem guarantees only allocation over the generated candidate pools. 
The standard randomized-rounding proof is given in \ExtendedMaterial{ap:multi-flow-proof}. It does not compare with the global optimum over routing, decoding decisions and so on.
Overall quality therefore depends on whether each $\Pi_i$ contains diverse feasible strategies, particularly alternatives that use different channels. This is affected by $M$ and the relaxed dominance rule.


\section{Performance Evaluation}
\label{sec:simulation}

In this section, we evaluate four aspects of the proposed framework: single-flow routing against static recovery policies, adaptive scheme assignment, multi-flow congestion, and running-time scalability.

\begin{table}[ht]
\centering
\caption{Parameters of Encoding Schemes Used in Simulations}
\label{tab:schemes}
\begin{tabular}{l|ccc}
\hline
\textbf{Scheme Name} & \textbf{[[n,k,$d_{\text{code}}$]]} & \textbf{Cost} & \textbf{Decoherence Time(\textmu s)} \\
\hline
Surface (d=3) & [[13,1,3]] & 5  & 50 \\
Surface (d=5) & [[41,1,5]] & 8  & 100 \\
Surface (d=7) & [[85,1,7]] & 10  & 200 \\
Surface (d=9) & [[145,1,9]] & 50 & 600 \\
\hline
\end{tabular}
\end{table}

\begin{figure*}[!t]
    \centering

    \def\subfigwidththree{0.30\textwidth}

    \subfloat[Core Algorithm Validation: Accept Ratio\label{fig:exp1_acceptance}]{%
        \includegraphics[width=\subfigwidththree]{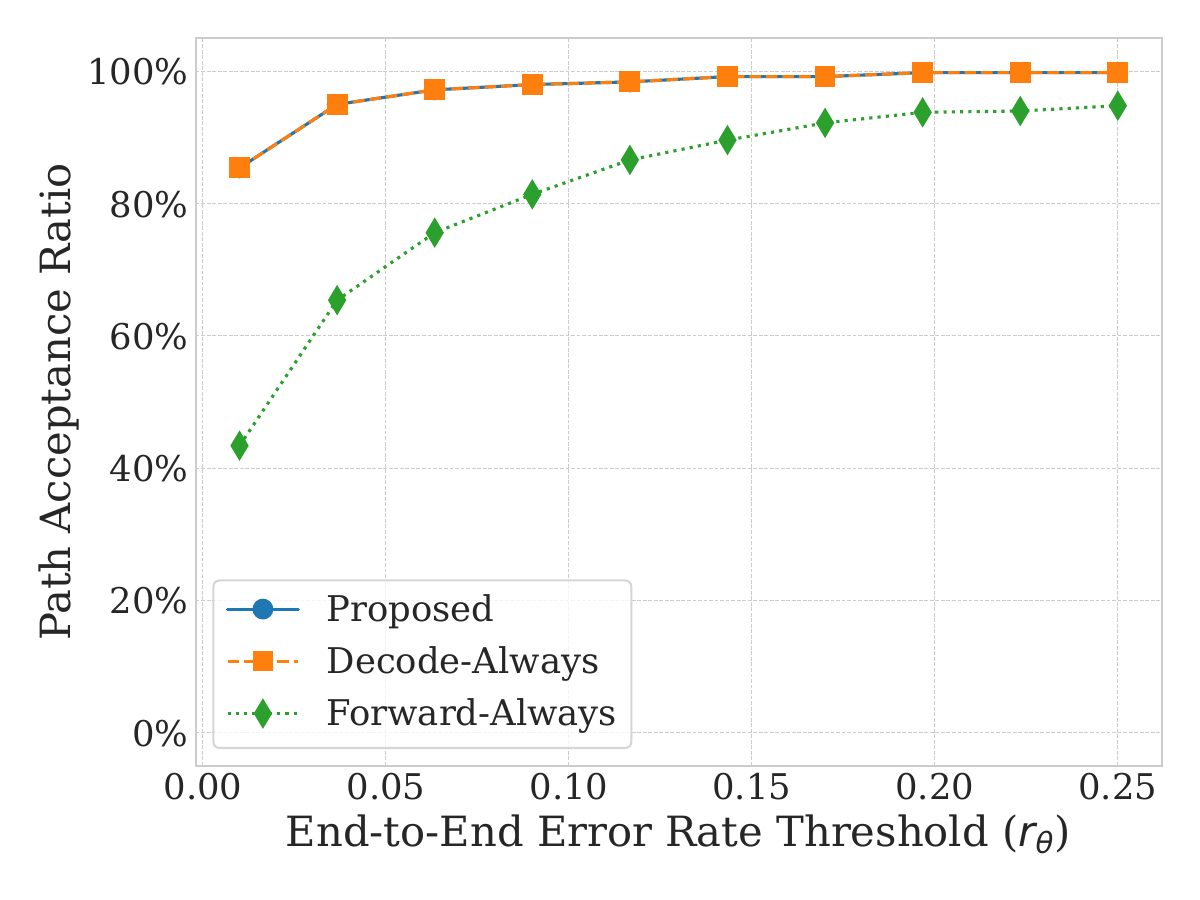}
    }
    \hfill
    \subfloat[Adaptive Scheme Assignment: Accept Ratio\label{fig:exp2_acceptance}]{%
        \includegraphics[width=\subfigwidththree]{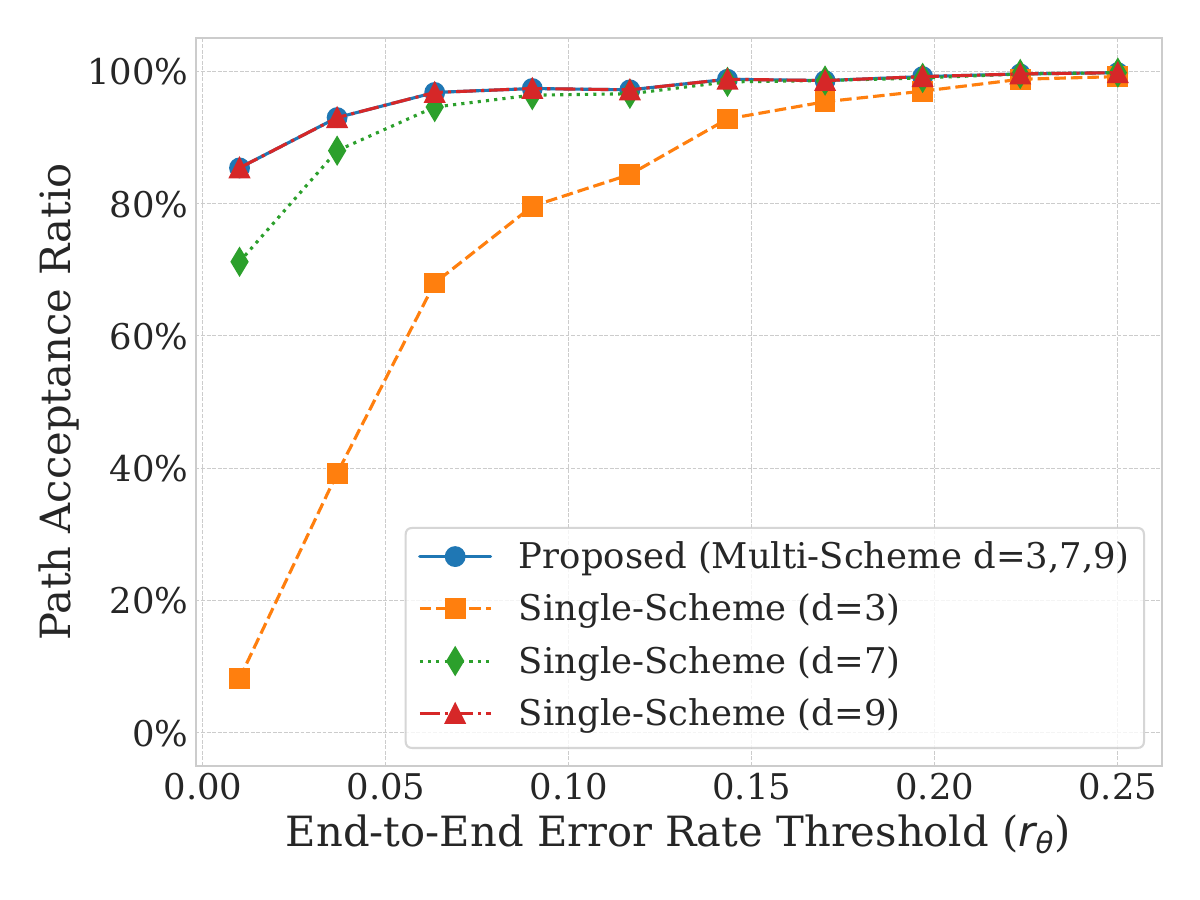}
    }
    \hfill
    \subfloat[Multi-Flow: Maximum Link Congestion\label{fig:exp3_congestion}]{%
        \includegraphics[width=\subfigwidththree]{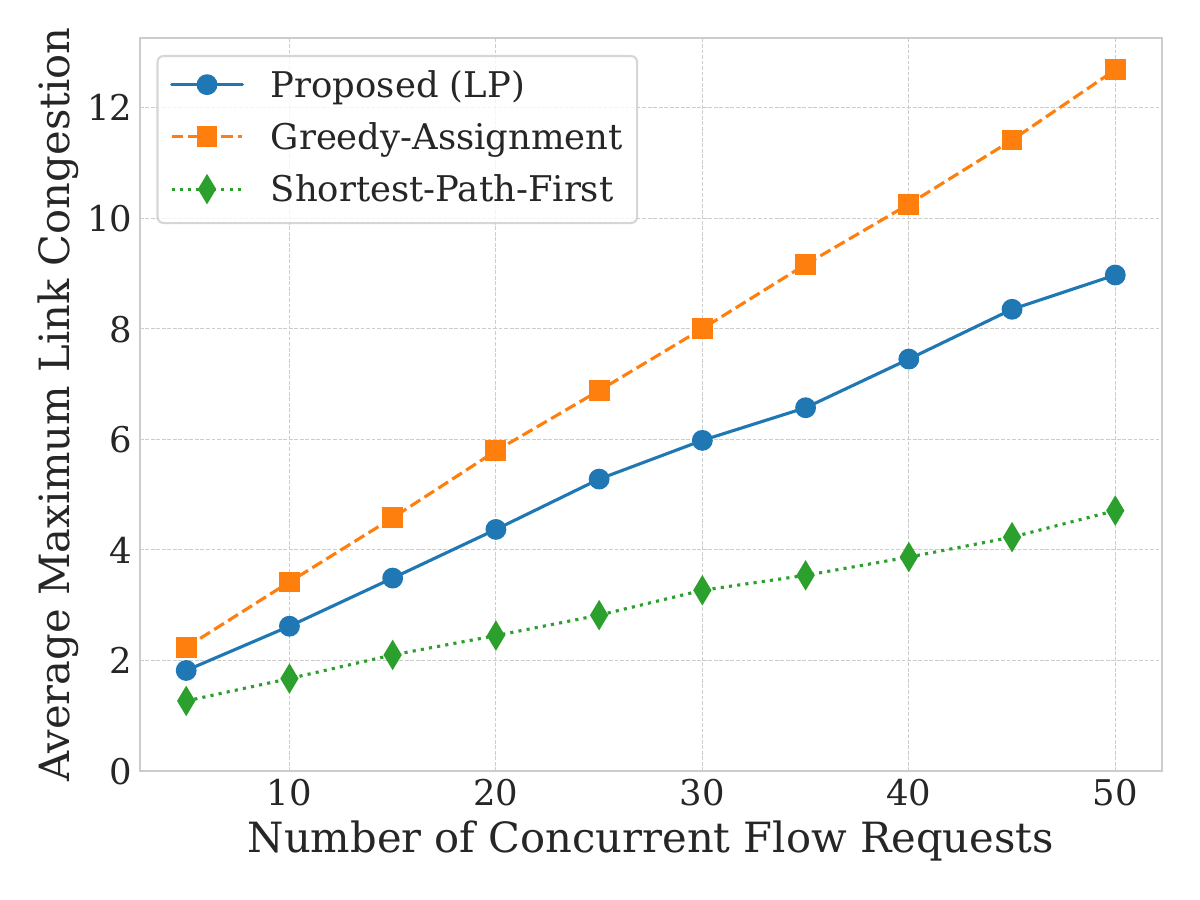}
    }


    \subfloat[Core Algorithm Validation: Avg Cost\label{fig:exp1_cost}]{%
        \includegraphics[width=\subfigwidththree]{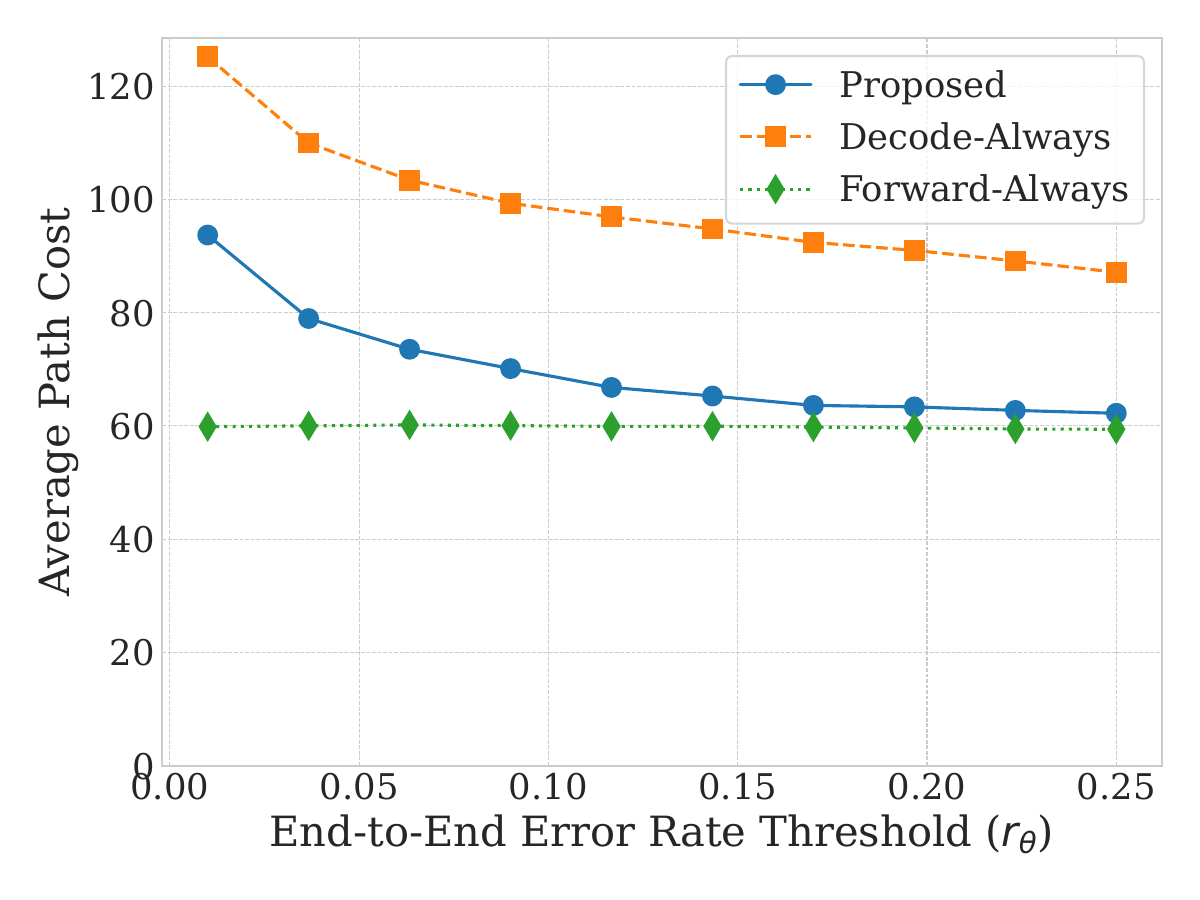}
    }
    \hfill
    \subfloat[Adaptive Scheme Assignment: Avg Cost\label{fig:exp2_cost}]{%
        \includegraphics[width=\subfigwidththree]{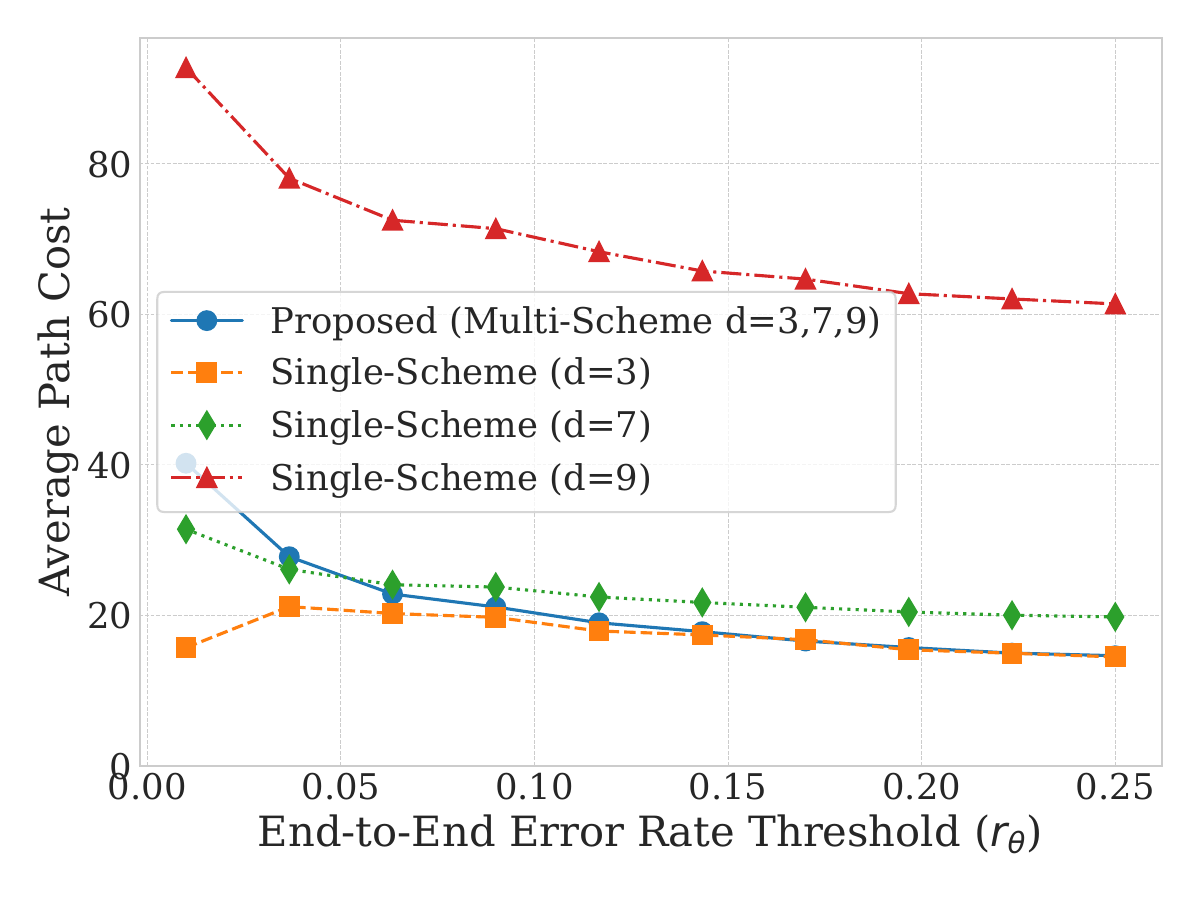}
    }
    \hfill
    \subfloat[Multi-Flow: Normalized Congestion\label{fig:exp3_congestion_unit}]{%
        \includegraphics[width=\subfigwidththree]{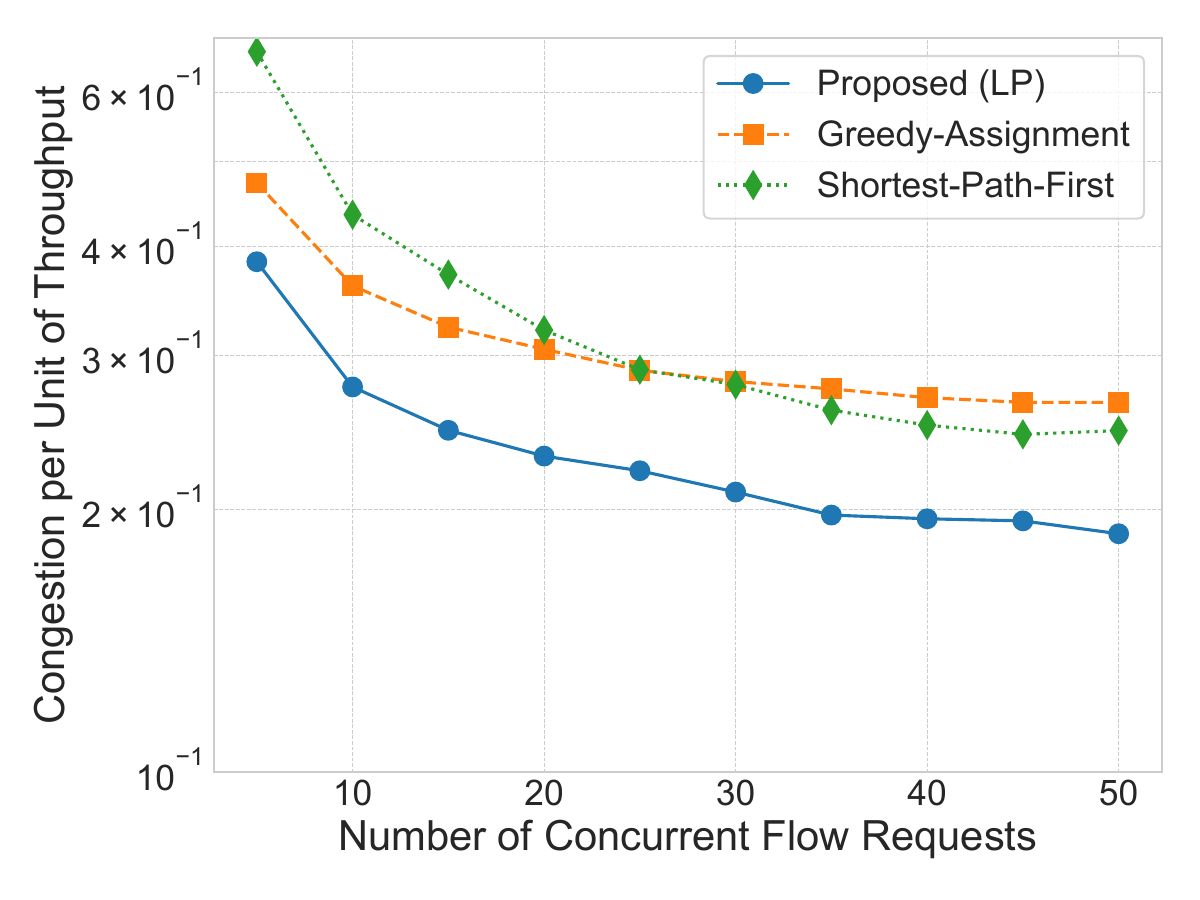}
    }


    \makebox[\textwidth][c]{%
        \subfloat[Running Time vs. Number of Nodes\label{fig:exp4_nodes}]{%
            \includegraphics[width=\subfigwidththree]{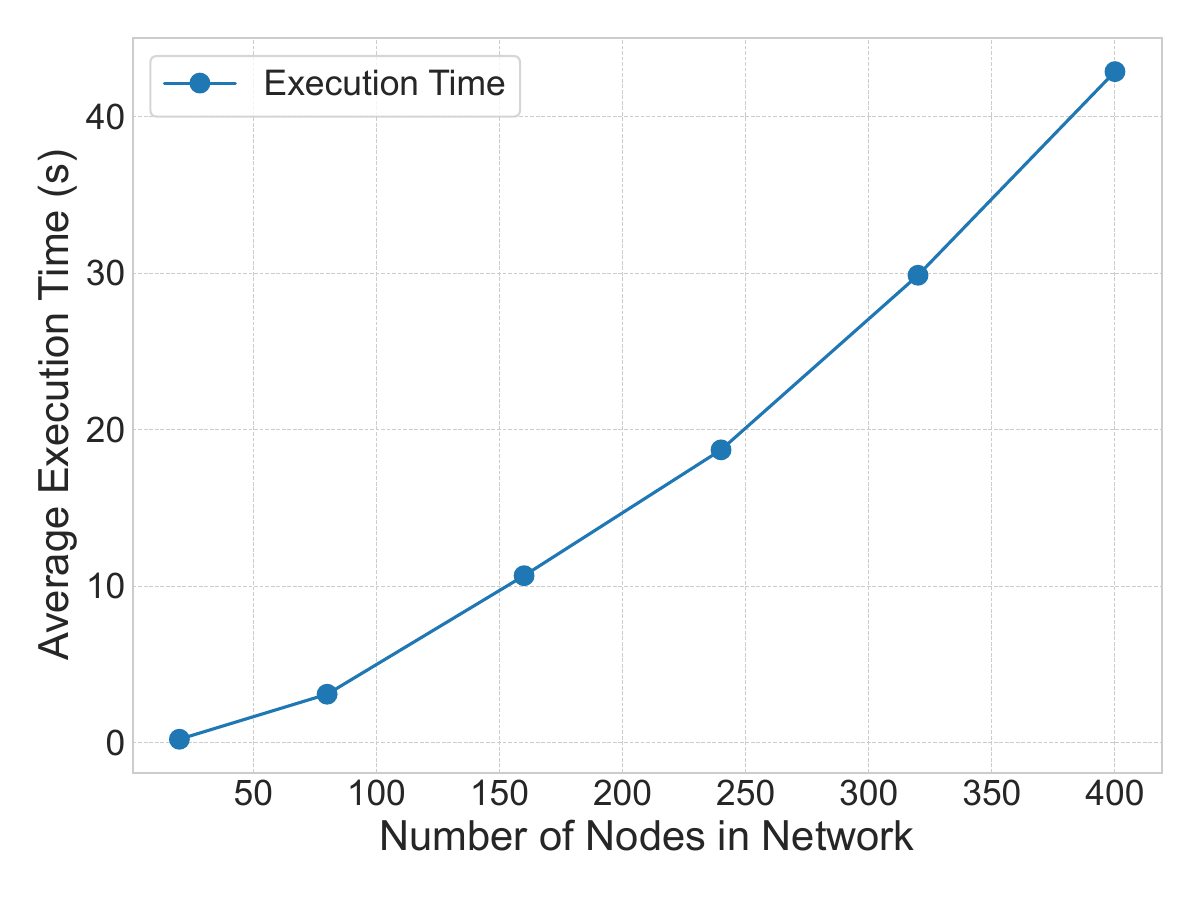}
        }
        \hspace{0.04\textwidth}
        \subfloat[Running Time vs. Number of Schemes\label{fig:exp4_schemes}]{%
            \includegraphics[width=\subfigwidththree]{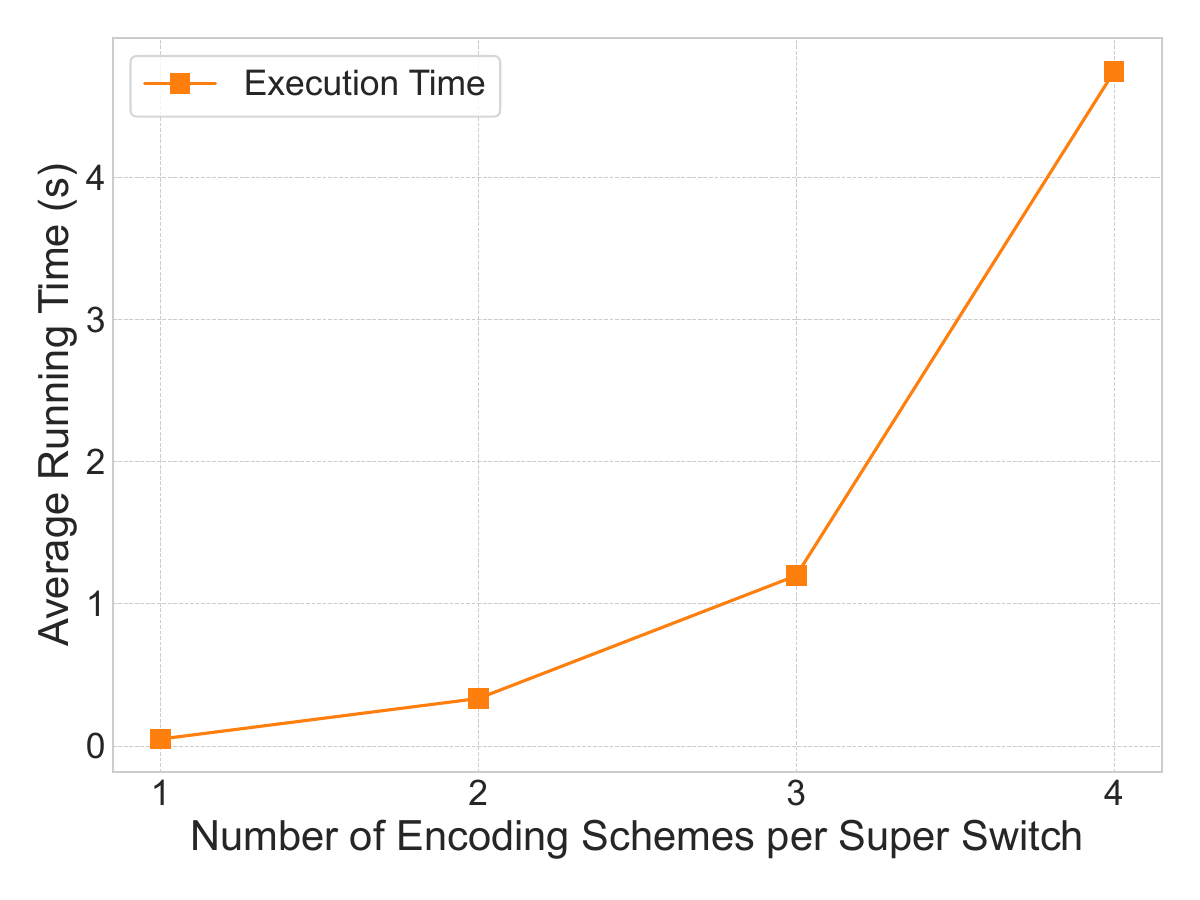}
        }
    }

    \caption{Performance evaluation: (a) and (d) validate the core algorithm; (b) and (e) evaluate adaptive scheme assignment; (c) and (f) report multi-flow performance; and (g) and (h) report running time.}
    \label{fig:all_experiments}
\end{figure*}

\subsection{Simulation Setup}

\subsubsection{Simulator}
%
%
We implement the evaluation in Python using \texttt{NetworkX}~\cite{hagberg2008networkx} for graph operations and \texttt{PuLP} with CBC solver for the ILP~\cite{mitchell2011pulp,forrest2005cbc}. It targets the routing abstraction rather than hardware and protocol events, which are the focus of \texttt{NetSquid}, \texttt{SeQUeNCe}, and \texttt{QuISP}.

\subsubsection{Network Topology}
%
%
We use the Barabási-Albert (BA) model to generate reproducible and scalable topologies with heterogeneous connectivity for controlled evaluation.
These topologies are not replicas of a specific deployment, and the results may vary for different topologies.
Unless otherwise specified, the four experiments use networks with 100, 100, 50, and 500 nodes, respectively.
The average degree is set to 2, and a fraction $p_{\text{super}}=0.4$ of nodes are randomly designated as QEC-capable super switches.

\subsubsection{Physical Parameters}
We model a metro-scale quantum network. For each channel $e$, its length $L_e$ is drawn from an exponential distribution with mean 2 km. The physical error rate $p_e$ is computed from photon loss under fiber attenuation $\alpha=0.15$ dB/km~\cite{agrawal2012fiber}:
\begin{equation}
    p_e(L_e) = \frac{3}{4} \left( 1 - 10^{-\frac{\alpha L_e}{10}} \right).
\end{equation}
The propagation time is calculated as $t_e=L_e/v_{\text{fiber}}$, where $v_{\text{fiber}}\approx 3.3\times 10^5$ m/s.

\subsubsection{Protection Schemes}
We use surface-code schemes with $d_{\text{code}}=3,5,7,9$; Table~\ref{tab:schemes} lists their qubit counts, costs, and decoherence times. We estimate $\bar f_{\sigma,j}$ via Monte Carlo decoder simulations with minimum-weight perfect matching (MWPM). For each $(\sigma,j)$, we sample and decode weight-$j$ Pauli patterns, marking a failure if the residual error induces a nontrivial logical operation.

\subsubsection{Compared Algorithms}
%
%
Decode-Always and Forward-Always are recovery-placement ablations.
Greedy-Assignment is an allocation-stage ablation, whereas SPF is an independent multi-flow baseline:
\begin{itemize}
    \item \textbf{Decode-Always}: A QEC-aware baseline that performs decoding at every super switch. It uses a simplified label-correcting algorithm that only explores paths with decoding at all available super switches.
%
%
    \item \textbf{Forward-Always}: Performs no intermediate recovery and treats intermediate super switches as forwarding-only nodes.
    \item \textbf{Greedy-Assignment}: Selects the lowest-cost strategy for each flow from the same candidate pool, without global congestion optimization.
    \item \textbf{Shortest-Path-First (SPF)}: An independent, cost-aware classical baseline. It computes each flow's minimum-cost physical path under the same link-cost metric, without our QEC-aware candidate generation or global congestion optimization, and then applies the same feasibility check.
\end{itemize}

%
%
Together, these baselines isolate the effects of recovery placement, per-flow path selection, and global allocation.
Classical MCPP solvers and the QEC approaches to be introduced in Section~\ref{sec:related-work} lack the same segment logical-error and inter-recovery lifetime constraints.
Comparing them requires surrogate metrics and recovery policies, which we leave to future cross-model evaluation.

\subsubsection{Performance Metrics}
We report five primary metrics: path acceptance ratio, average accepted-path cost, maximum channel congestion, congestion per unit throughput, and multi-scheme running time. The four experiments~(Fig.~\ref{fig:all_experiments}) average 500, 500, 100, and 500 independent runs, respectively.
%
%
The multi-flow experiment uses $M=3$, allowing up to three candidates per flow. A larger \(M\) increases both candidate diversity and computation.

\subsection{Core Algorithm Validation}

This experiment validates the fixed-scheme instantiation of the proposed framework in a single-flow setting using the $d_{\text{code}}=9$ surface code. 
Fig.~\ref{fig:exp1_acceptance} shows that the framework achieves an acceptance ratio comparable to Decode-Always and clearly higher than Forward-Always. 
Under strict error-rate thresholds, Proposed and Decode-Always accept about $85\%$ of requests, whereas Forward-Always accepts only about $43\%$. 
This confirms the necessity of intermediate QEC operations.

The cost advantage is shown in Fig.~\ref{fig:exp1_cost}. Compared with Decode-Always, the proposed framework reduces the average path cost by approximately $25$--$30\%$, e.g., from about $125$ to $94$ under the strictest threshold and from about $87$--$89$ to $62$ when $r_\theta=0.25$. This is because it performs decoding only when necessary, achieving a better cost-feasibility trade-off.

\subsection{Advantage of Adaptive Scheme Assignment}

This experiment evaluates the benefit of adaptive scheme assignment in a single-flow, multi-scheme setting. We compare the proposed multi-scheme algorithm with fixed single-scheme strategies using $d_{\text{code}}=3$, $7$, or $9$.

As shown in Fig.~\ref{fig:exp2_acceptance}, the weak $d_{\text{code}}=3$ scheme is inexpensive but unreliable under strict thresholds, accepting only about $8\%$ of requests, while Proposed and the robust $d_{\text{code}}=9$ scheme accept about $85\%$. Fig.~\ref{fig:exp2_cost} further shows that Proposed achieves this high feasibility at much lower cost: its average cost decreases from about $40$ to $15$ as $r_\theta$ increases, whereas the fixed $d_{\text{code}}=9$ strategy remains around $93$ to $62$. Thus, adaptive scheme assignment reduces cost by roughly $55$--$75\%$ compared with always using the strongest scheme, while preserving a similar acceptance ratio.

\subsection{Network-Level Performance (Multi-Flow)}

%
%
We next evaluate the network-level performance of the proposed framework in a multi-flow setting.
Fig.~\ref{fig:exp3_congestion} shows that raw maximum congestion rises with the number of concurrent flows for all methods. Since they accept different traffic volumes, Fig.~\ref{fig:exp3_congestion_unit} normalizes congestion by throughput.

At $50$ concurrent flows, Proposed reaches about $0.18$--$0.19$, versus about $0.26$ for Greedy-Assignment and $0.24$--$0.25$ for Shortest-Path-First.
These values represent reductions of approximately $28$--$31\%$ and $23$--$25\%$, respectively.
The Greedy-Assignment gap quantifies the benefit of global allocation.

\subsection{Running Time}

Finally, we evaluate the computational scalability of the proposed framework. As shown in Fig.~\ref{fig:exp4_nodes} and Fig.~\ref{fig:exp4_schemes}, the running time grows with both the network size $|V|$ and the number of available schemes $b$. It remains below $1$ second for networks with about $40$ nodes and grows to about $43$ seconds for $400$ nodes. The dependence on $b$ is sharper: the running time increases from about $0.05$ seconds for $b=1$ to about $1.2$ seconds for $b=3$ and about $4.7$ seconds for $b=4$, consistent with the $O(b^2)$ growth of auxiliary channels in the scheme-expanded graph.

\section{Related Work}
\label{sec:related-work}

%
%
Classical MCPP handles fixed additive edge weights using multi-label Bellman--Ford or Dijkstra variants~\cite{wang1995bandwidth,martins1984multicriteria}. 
Our formulation retains this search structure. However, it uses segment states and QEC-aware label transitions rather than fixed physical-channel logical-error weights, to adapt to quantum-network constraints~\cite{zhang2024quantumstack}.

Quantum routing either distributes entangled pairs or sends encoded qubits directly. 
The first paradigm creates end-to-end entanglement through generation, swapping, and purification. 
Prior work studies redundancy, fidelity-aware purification, multi-flow distribution, and opportunistic routing~\cite{zhao2021redundant,zhao2022e2e,chakraborty2020entanglement,farahbakhsh2022opportunistic}. Other studies address end-to-end fidelity, delay, distillation, latency, memory-limited repeaters, and purification scheduling~\cite{kumar2025routing,chehimi2025delay,liu2025statistical,vanmilligen2025entanglement,chen_optimum_2024}. 
Surveys further highlight quantum-specific constraints such as swapping probability, fidelity, optical loss, and short entangled-state lifetimes~\cite{abane2024entanglement}. 
These studies emphasize state quality and time but distribute entanglement rather than route encoded data qubits.

The second paradigm, which is the focus of this paper, sends quantum in\-for\-ma\-tion through physical channels and protects it with quantum error correction (QEC). 
Fowler et al. introduced surface-code-based quantum com\-mu\-ni\-ca\-tion~\cite{fowler2010surface}. 
Later work studied physical-qubit trans\-mis\-sion, including fidelity-based multicast~\cite{wang2019multicast}. 
More recently, SurfaceNet proposed a fault-tolerant ar\-chi\-tec\-ture that uses surface codes to preserve and transfer quantum messages~\cite{hu2024surfacenet}. 
Hu et al. showed that surface-code-aware routing improves com\-mu\-ni\-ca\-tion fidelity and network throughput~\cite{hu2024qnrouting}. 
Follow-up work optimizes surface-code structures for quantum-network error correction~\cite{hu2025disparity}. 
These works are closely related to our setting because they consider QEC-protected com\-mu\-ni\-ca\-tion and heterogeneous quantum-network resources. 
%
%
However, they use coarser fidelity or resource abstractions. They also do not jointly enforce segment logical error, recovery placement, scheme selection, and inter-recovery lifetime. 

Recent work also studies routing beyond entanglement distribution.
Circuit switching and packet switching have been considered for quantum secure direct communication networks~\cite{sun2025qsdcrouting}.
These studies broaden quantum-network design but do not address stabilizer-code-protected decode-and-re-encode routing.

In summary, prior work has laid important foundations for both paradigms of quantum networking. 
Nevertheless, the literature remains weighted toward entanglement distribution and repeater-based routing. 
Existing work on QEC-protected direct transmission has yet to capture how routing and recovery interact with accumulated physical noise, decoding placement, logical errors, and finite logical lifetime. 
This paper addresses this gap. 
We formulate a joint spatio-temporal path optimization problem for block-style stabilizer-code-protected direct transmission. 
Our framework uses a generic state space to jointly optimize path selection, decoding placement, and protection-scheme assignment under logical-error and decoherence-time constraints.

\section{Conclusion}
\label{sec:conclusion}

%
%
In this paper, we studied direct transmission of encoded qubits over noisy quantum networks. Unlike entanglement-based teleportation, direct transmission requires joint selection of paths, recovery locations, and protection schemes.
We formulated this spatio-temporal problem with non-additive segment logical error and lifetime constraints, and developed a framework for fixed and flexible schemes and multi-flow routing.
Simulations show a better cost-feasibility trade-off and lower normalized congestion.
The model assumes block-style node-local recovery.
Future work will include recovery faults and latency, realistic correlated noise, other QEC architectures, and multi-path routing. A broader evaluation will cover additional topology models, statistical error bars, sensitivity to size of multi-flow pool, and stronger multi-flow baselines.


\bibliographystyle{IEEEtran}
\bibliography{references}

\ifdefined\ICNPShepherdVersion
\appendices

\section{First-Order Approximation of Segment-Level Pauli Errors}
\label{ap:first-order-composite-error}

We justify the first-order approximation in our composite channel model. 
Consider a path segment $q$, and let $E(q)$ denote its set of physical channels. 
For each channel $e\in E(q)$, let $P_e$ denote the Pauli error that it introduces on a physical qubit, where
\[
P_e\in\{I,X,Y,Z\}.
\]
Let
\[
\Pr(P_e\neq I)=p_e.
\]
Equivalently, if the error is represented in the Pauli basis, then
\[
p_e=p_{e,\text{X}}+p_{e,\text{Y}}+p_{e,\text{Z}},
\]
where $p_{e,\text{X}}$, $p_{e,\text{Y}}$, and $p_{e,\text{Z}}$ are the probabilities of X, Y, and Z errors on channel $e$, respectively. We assume that errors introduced by different channels are independent.

The exact Pauli operator accumulated over segment $q$ is the product
\[
P_q=\prod_{e\in E(q)} P_e,
\]
where multiplication is taken modulo global phase. The exact net Pauli error probability of the segment is
\[
p_q^{\mathrm{net}}=\Pr(P_q\neq I).
\]
In contrast, the probability that at least one physical error occurs in the segment is
\[
p_q^{\mathrm{occ}}
=
\Pr\left(\exists e\in E(q): P_e\neq I\right)
=
1-\prod_{e\in E(q)}(1-p_e).
\]
Clearly, $p_q^{\mathrm{occ}}$ may overestimate $p_q^{\mathrm{net}}$, because multiple non-identity Pauli errors can compose to the identity.

\begin{proposition}[Pauli-cancellation error is second order]
\label{prop:pauli-cancellation-second-order}
Let
\[
S_q=\sum_{e\in E(q)}p_e.
\]
Then
\begin{equation*}
\label{eq:pauli-cancellation-bound}
\begin{aligned}
0 \le p_q^{\text{occ}}-p_q^{\text{net}}
&\le \Pr\!\left(\sum_{e\in E(q)} \mathbf{1}_{\{P_e\neq I\}} \ge 2\right) \\
&\le \sum_{\{e,e'\}\subseteq E(q)} p_e p_{e'} \\
&\le \frac{S_q^2}{2}.
\end{aligned}
\end{equation*}
Consequently,
\[
p_q^{\mathrm{net}}
=
p_q^{\mathrm{occ}}+O(S_q^2)
=
\sum_{e\in E(q)}p_e+O(S_q^2).
\]
Thus, the approximation error caused by Pauli cancellation is $O(S_q^2)$.
\end{proposition}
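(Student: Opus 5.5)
Let $N=\sum_{e\in E(q)}\mathbf{1}_{\{P_e\neq I\}}$ count the channels in the segment that introduce a non-identity Pauli. The plan is to sandwich $p_q^{\mathrm{net}}$ between events defined by $N$.

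First, I rewrite both probabilities in terms of $N$. By definition, $p_q^{\mathrm{occ}}=\Pr(N\ge 1)$. The event $\{P_q\neq I\}$ is contained in $\{N\ge1\}$, because if every factor is $I$ the product is $I$. This gives $p_q^{\mathrm{net}}\le p_q^{\mathrm{occ}}$, which is the lower bound $0\le p_q^{\mathrm{occ}}-p_q^{\mathrm{net}}$.

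Next, I observe that on $\{N=1\}$ exactly one factor is a non-identity Pauli and the rest are $I$. Modulo phase, the product is then that single non-identity Pauli, so $\{N=1\}\subseteq\{P_q\neq I\}$. Therefore
\[
p_q^{\mathrm{occ}}-p_q^{\mathrm{net}}=\Pr(N\ge1,\,P_q=I)\le\Pr(N\ge1)-\Pr(N=1)=\Pr(N\ge2),
\]
which is the first inequality of the chain.

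For the second inequality, I take a union bound over unordered pairs. The event $\{N\ge2\}$ equals $\bigcup_{\{e,e'\}}\{P_e\neq I,\,P_{e'}\neq I\}$, and by independence of channel errors each pair has probability $p_ep_{e'}$. This gives $\Pr(N\ge2)\le\sum_{\{e,e'\}}p_ep_{e'}$.

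For the third inequality, I use the algebraic identity $\sum_{\{e,e'\}}p_ep_{e'}=\tfrac12\bigl(S_q^2-\sum_e p_e^2\bigr)\le S_q^2/2$.

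For the consequence, the chain immediately yields $p_q^{\mathrm{net}}=p_q^{\mathrm{occ}}+O(S_q^2)$. It remains to compare $p_q^{\mathrm{occ}}$ with $S_q$. The union bound gives $p_q^{\mathrm{occ}}\le S_q$, and Bonferroni's inequality gives $p_q^{\mathrm{occ}}\ge S_q-\sum_{\{e,e'\}}p_ep_{e'}\ge S_q-S_q^2/2$. Hence $p_q^{\mathrm{occ}}=S_q+O(S_q^2)$, and combining with the previous step gives $p_q^{\mathrm{net}}=\sum_{e\in E(q)}p_e+O(S_q^2)$.

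The only conceptually delicate step is the inclusion $\{N=1\}\subseteq\{P_q\neq I\}$: it requires that the product of Pauli operators be taken modulo global phase, so that a single non-identity factor can never reduce to the identity. Everything else is union bounds and independence.
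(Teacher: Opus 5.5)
Your proposal is correct and follows the paper's proof essentially step for step. Both arguments use the containment $\{P_q\neq I\}\subseteq\{N\ge 1\}$, the observation that a discrepancy requires $N\ge 2$ since a single non-identity factor cannot reduce to $I$ modulo phase, the union bound over unordered pairs with independence, and the bound $\sum_{\{e,e'\}}p_ep_{e'}\le S_q^2/2$. Your union-bound-plus-Bonferroni step giving $|p_q^{\mathrm{occ}}-S_q|\le S_q^2/2$ is simply a more explicit version of the paper's appeal to the inclusion-exclusion expansion.
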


\begin{proof}
The event $\{P_q\neq I\}$ implies that at least one non-identity Pauli error has occurred, but the converse is not always true because multiple Pauli errors may compose to the identity. Therefore,
\[
p_q^{\mathrm{net}}\le p_q^{\mathrm{occ}},
\]
which gives
\[
p_q^{\mathrm{occ}}-p_q^{\mathrm{net}}\ge 0.
\]

Next, observe that $p_q^{\mathrm{occ}}$ and $p_q^{\mathrm{net}}$ can differ only when at least two non-identity Pauli errors occur in the segment. Indeed, if no physical error occurs, then $P_q=I$; if exactly one physical error occurs, then $P_q\in\{X,Y,Z\}$ and hence $P_q\neq I$. Therefore, cancellation to the identity is possible only when at least two physical errors occur. Let
\[
N_q=\sum_{e\in E(q)}\mathbf{1}_{\{P_e\neq I\}}
\]
be the number of non-identity physical errors in segment $q$. Then
\[
p_q^{\mathrm{occ}}-p_q^{\mathrm{net}}
\le
\Pr(N_q\ge 2).
\]

By the union bound over all unordered pairs of channels,
\[
\Pr(N_q\ge 2)
\le
\sum_{\{e,e'\}\subseteq E(q)}\Pr(P_e\neq I, P_{e'}\neq I).
\]
Under the independent channel-error assumption,
\[
\Pr(P_e\neq I, P_{e'}\neq I)=p_ep_{e'}.
\]
Thus,
\[
\Pr(N_q\ge 2)
\le
\sum_{\{e,e'\}\subseteq E(q)}p_ep_{e'}.
\]
Finally,
\[
\sum_{\{e,e'\}\subseteq E(q)}p_ep_{e'}
\le
\frac{1}{2}\left(\sum_{e\in E(q)}p_e\right)^2
=
\frac{S_q^2}{2}.
\]
Therefore,
\[
0\le p_q^{\mathrm{occ}}-p_q^{\mathrm{net}}
\le
\frac{S_q^2}{2}.
\]

It remains to relate these quantities to the first-order term. Since
\[
p_q^{\mathrm{occ}}
=
1-\prod_{e\in E(q)}(1-p_e),
\]
its inclusion-exclusion expansion gives
\[
p_q^{\mathrm{occ}}
=
\sum_{e\in E(q)}p_e+O(S_q^2).
\]
Combining this with
\[
p_q^{\mathrm{net}}=p_q^{\mathrm{occ}}+O(S_q^2)
\]
yields
\[
p_q^{\mathrm{net}}
=
\sum_{e\in E(q)}p_e+O(S_q^2).
\]
This proves that Pauli cancellation contributes an $O(S_q^2)$ approximation error in the low-error regime.
\end{proof}

\section{Exact Logical-Error Composition and Approximation Chain}
\label{ap:exact-logical-composition}
%
%
This appendix separates the two approximations underlying Eq.~\eqref{eq:r_new}. The complete approximation chain compares the exact-input logical-Pauli model
\[
p_q^{\mathrm{net}}
\xrightarrow{F_\sigma}
p_{q,\sigma}^{\log,\mathrm{net}}
\longrightarrow
r_{\mathrm{new}}^{\mathrm{exact}}
\]
with the routing model
\[
p_q^{\mathrm{comp}}
\xrightarrow{F_\sigma}
p_{q,\sigma}^{\log}
\longrightarrow
r_{\mathrm{new}}^{\mathrm{model}}.
\]
The first difference arises from using $p_q^{\mathrm{comp}}$ as a surrogate for the net segment-level Pauli error probability $p_q^{\mathrm{net}}$. The second replaces exact logical-Pauli composition with the scalar recurrence in Eq.~\eqref{eq:r_new}. We first propagate the upstream segment-model error and then combine it with the recurrence-composition bound.
The additive transformation of Eq.~\eqref{eq:r_new} applies to the routing-level scalar recurrence; the exact logical-Pauli composition below also depends on the error-type distributions through the cancellation term.

For scheme $\sigma$, define the logical-error mapping
\begin{equation}
F_\sigma(x)=\sum_{j=0}^{n_\sigma}\bar f_{\sigma,j}\binom{n_\sigma}{j}(1-x)^{n_\sigma-j}x^j.
\label{eq:F-sigma}
\end{equation}
The output of Eq.~\eqref{eq:simple-logical-error-rate-of-composite-channel} remains
\[
p_{q,\sigma}^{\log}=F_\sigma(p_q^{\mathrm{comp}}).
\]
Using the exact net Pauli error probability $p_q^{\mathrm{net}}$ and $S_q$ defined in Appendix~\ref{ap:first-order-composite-error}, introduce the accurate-input quantity
\[
p_{q,\sigma}^{\log,\mathrm{net}}=F_\sigma(p_q^{\mathrm{net}}).
\]
Let
\[
L_\sigma=\max_{x\in[0,1]}|F_\sigma'(x)|.
\]
The derivative of the Bernstein polynomial in Eq.~\eqref{eq:F-sigma} is
\[
F_\sigma'(x)=n_\sigma\sum_{j=0}^{n_\sigma-1}
(\bar f_{\sigma,j+1}-\bar f_{\sigma,j})
\binom{n_\sigma-1}{j}(1-x)^{n_\sigma-1-j}x^j.
\]
Because the Bernstein basis is nonnegative and sums to one on $[0,1]$,
\begin{equation}
L_\sigma\le n_\sigma\max_{0\le j<n_\sigma}
|\bar f_{\sigma,j+1}-\bar f_{\sigma,j}|.
\label{eq:L-sigma-bound}
\end{equation}
Proposition~\ref{prop:pauli-cancellation-second-order} and the mean-value theorem then give
\begin{equation}
|p_{q,\sigma}^{\log}-p_{q,\sigma}^{\log,\mathrm{net}}|
\le L_\sigma|p_q^{\mathrm{comp}}-p_q^{\mathrm{net}}|
\le \frac{L_\sigma S_q^2}{2}.
\label{eq:upstream-logical-bound}
\end{equation}

Let
\[
\mathcal{L}=\{I,\bar{X},\bar{Y},\bar{Z}\}
\]
denote the logical Pauli operators modulo global phase. Suppose that before traversing a path segment $q$, the residual logical error on the logical qubit is described by a distribution
\[
\mu^{\mathrm{prev}} = \bigl(\mu_I^{\mathrm{prev}}, \mu_{\bar{X}}^{\mathrm{prev}}, \mu_{\bar{Y}}^{\mathrm{prev}}, \mu_{\bar{Z}}^{\mathrm{prev}}\bigr),
\]
where
\[
\mu_I^{\mathrm{prev}}+\mu_{\bar{X}}^{\mathrm{prev}}+\mu_{\bar{Y}}^{\mathrm{prev}}+\mu_{\bar{Z}}^{\mathrm{prev}}=1.
\]
Let
\[
r_{\mathrm{prev}} = \mu_{\bar{X}}^{\mathrm{prev}}+\mu_{\bar{Y}}^{\mathrm{prev}}+\mu_{\bar{Z}}^{\mathrm{prev}} = 1-\mu_I^{\mathrm{prev}}
\]
be the pre-segment logical failure probability.

For a fixed segment $q$ and decoding scheme $\sigma$, let
\[
\lambda^{(q,\sigma)} = \bigl(\lambda_I^{(q,\sigma)}, \lambda_{\bar{X}}^{(q,\sigma)}, \lambda_{\bar{Y}}^{(q,\sigma)}, \lambda_{\bar{Z}}^{(q,\sigma)}\bigr)
\]
be the logical Pauli distribution induced by the segment and subsequent decoding/re-encoding operation when the exact net segment input is used, where
\[
\lambda_I^{(q,\sigma)}+\lambda_{\bar{X}}^{(q,\sigma)}+\lambda_{\bar{Y}}^{(q,\sigma)}+\lambda_{\bar{Z}}^{(q,\sigma)}=1,
\]
and
\[
p_{q,\sigma}^{\log,\mathrm{net}} = \lambda_{\bar{X}}^{(q,\sigma)}+\lambda_{\bar{Y}}^{(q,\sigma)}+\lambda_{\bar{Z}}^{(q,\sigma)} = 1-\lambda_I^{(q,\sigma)}.
\]

\begin{proposition}[Exact logical-error composition]
\label{prop:exact-composition}
Under the logical-Pauli abstraction above, the exact post-segment logical Pauli distribution is given by the Pauli-group convolution
\[
\mu_L^{\mathrm{new}} = \sum_{\substack{A,B\in\mathcal{L}:\\ AB=L}} \mu_A^{\mathrm{prev}}\lambda_B^{(q,\sigma)}, \qquad \forall L\in\mathcal{L},
\]
where the multiplication is taken modulo global phase. In particular, the exact post-segment logical failure probability satisfies
\[
r_{\mathrm{new}}^{\mathrm{exact}} = r_{\mathrm{prev}} + (1-r_{\mathrm{prev}})p_{q,\sigma}^{\log,\mathrm{net}} - \sum_{L\in\{\bar{X},\bar{Y},\bar{Z}\}} \mu_L^{\mathrm{prev}}\lambda_L^{(q,\sigma)}.
\]
\end{proposition}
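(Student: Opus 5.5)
The argument treats the residual logical error before the segment and the logical error induced by the segment as independent random elements of the logical Pauli group $\mathcal{L}$ modulo global phase. We take $A\sim\mu^{\mathrm{prev}}$ and $B\sim\lambda^{(q,\sigma)}$, independent; this is the segment-level independence assumption already used for Eq.~\eqref{eq:r_new}. The post-segment residual error is the product $AB$ modulo phase. The operator applied to the encoded state is first the old residual error and then the segment's logical error, and phases are irrelevant because $\mathcal{L}$ modulo phase is the abelian group $\mathbb{Z}_2\times\mathbb{Z}_2$. By independence,
\[
\Pr(AB=L)=\sum_{A,B:\,AB=L}\mu_A^{\mathrm{prev}}\lambda_B^{(q,\sigma)},
\]
which is the convolution formula claimed in the proposition.

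For the failure probability we compute $r_{\mathrm{new}}^{\mathrm{exact}}=1-\mu_I^{\mathrm{new}}$. In $\mathbb{Z}_2\times\mathbb{Z}_2$ every element is its own inverse, so $AB=I$ if and only if $A=B$. This gives
\[
\mu_I^{\mathrm{new}}=\mu_I^{\mathrm{prev}}\lambda_I^{(q,\sigma)}+\sum_{L\in\{\bar X,\bar Y,\bar Z\}}\mu_L^{\mathrm{prev}}\lambda_L^{(q,\sigma)}.
\]
Next we substitute $\mu_I^{\mathrm{prev}}=1-r_{\mathrm{prev}}$ and $\lambda_I^{(q,\sigma)}=1-p_{q,\sigma}^{\log,\mathrm{net}}$, and expand
\[
1-(1-r_{\mathrm{prev}})(1-p_{q,\sigma}^{\log,\mathrm{net}})=r_{\mathrm{prev}}+(1-r_{\mathrm{prev}})\,p_{q,\sigma}^{\log,\mathrm{net}}.
\]
Subtracting the cancellation sum then yields the stated expression for $r_{\mathrm{new}}^{\mathrm{exact}}$.

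The only step needing care is the modeling claim that composing the residual error with the new segment error is exactly group multiplication modulo phase. Here I would argue briefly that under the logical-Pauli abstraction the decode/re-encode map acts as a logical Pauli channel that commutes, up to phase, with pre-existing logical Paulis. A pre-existing logical error is a logical operator, so it lies in the normalizer and produces a trivial syndrome. The recovery therefore corrects only the segment's physical error and leaves the residual $A$ composed with the new logical error $B$. Independence of $A$ and $B$ is the stated modeling assumption. The rest is bookkeeping.
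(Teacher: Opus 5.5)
Your proposal is correct and follows the paper's proof. Both compose the pre-existing logical Pauli with the segment-induced one to get the convolution. Both then use that $AB=I$ if and only if $A=B$ modulo phase to obtain $\mu_I^{\mathrm{new}}=\mu_I^{\mathrm{prev}}\lambda_I^{(q,\sigma)}+\sum_{L\in\{\bar X,\bar Y,\bar Z\}}\mu_L^{\mathrm{prev}}\lambda_L^{(q,\sigma)}$, and substitute $\mu_I^{\mathrm{prev}}=1-r_{\mathrm{prev}}$ and $\lambda_I^{(q,\sigma)}=1-p_{q,\sigma}^{\log,\mathrm{net}}$. You additionally state the independence of $A$ and $B$ and give the trivial-syndrome reason why composition is group multiplication; the paper's proof uses these only implicitly, so spelling them out makes the argument more transparent without changing it.
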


\begin{proof}
The exact post-segment logical Pauli operator is obtained by composing the pre-existing logical Pauli error with the residual logical Pauli error induced by the new segment. Therefore, the post-segment distribution is the convolution of $\mu^{\mathrm{prev}}$ and $\lambda^{(q,\sigma)}$ over the logical Pauli group.

To derive the exact logical failure probability, it suffices to compute the probability that the post-segment logical operator equals the identity. Since for Pauli operators modulo global phase we have $AB=I$ if and only if $A=B$,
\[
\begin{aligned}
\mu_I^{\mathrm{new}}
= {} & \mu_I^{\mathrm{prev}}\lambda_I^{(q,\sigma)}
 + \mu_{\bar{X}}^{\mathrm{prev}}\lambda_{\bar{X}}^{(q,\sigma)} \\
& + \mu_{\bar{Y}}^{\mathrm{prev}}\lambda_{\bar{Y}}^{(q,\sigma)}
 + \mu_{\bar{Z}}^{\mathrm{prev}}\lambda_{\bar{Z}}^{(q,\sigma)}.
\end{aligned}
\]
Hence
\begin{align*}
r_{\mathrm{new}}^{\mathrm{exact}}
&= 1-\mu_I^{\mathrm{new}} \\
&= 1-\Bigl(
\mu_I^{\mathrm{prev}}\lambda_I^{(q,\sigma)}
 + \mu_{\bar{X}}^{\mathrm{prev}}\lambda_{\bar{X}}^{(q,\sigma)} \\
&\qquad
 + \mu_{\bar{Y}}^{\mathrm{prev}}\lambda_{\bar{Y}}^{(q,\sigma)}
 + \mu_{\bar{Z}}^{\mathrm{prev}}\lambda_{\bar{Z}}^{(q,\sigma)}
\Bigr).
\end{align*}
Using
\[
\mu_I^{\mathrm{prev}}=1-r_{\mathrm{prev}}, \qquad \lambda_I^{(q,\sigma)}=1-p_{q,\sigma}^{\log,\mathrm{net}},
\]
we obtain
\[
\begin{aligned}
r_{\mathrm{new}}^{\mathrm{exact}}
= {} & r_{\mathrm{prev}} + (1-r_{\mathrm{prev}})p_{q,\sigma}^{\log,\mathrm{net}} \\
& - \sum_{L\in\{\bar{X},\bar{Y},\bar{Z}\}}
\mu_L^{\mathrm{prev}}\lambda_L^{(q,\sigma)}.
\end{aligned}
\]
This proves the claim.
\end{proof}

\begin{proposition}[Recurrence-composition bound]
\label{prop:second-order}
Let
\[
r_{\mathrm{new}}^{\mathrm{approx}} = r_{\mathrm{prev}} + (1-r_{\mathrm{prev}})p_{q,\sigma}^{\log,\mathrm{net}}
\]
denote the scalar recurrence with the accurate segment logical-error input. Then
\[
0 \le r_{\mathrm{new}}^{\mathrm{approx}} - r_{\mathrm{new}}^{\mathrm{exact}} \le r_{\mathrm{prev}}\,p_{q,\sigma}^{\log,\mathrm{net}}.
\]
Consequently, if
\[
r_{\mathrm{prev}}\le \varepsilon \qquad\text{and}\qquad p_{q,\sigma}^{\log,\mathrm{net}}\le \varepsilon,
\]
then
\[
\bigl|r_{\mathrm{new}}^{\mathrm{approx}} - r_{\mathrm{new}}^{\mathrm{exact}}\bigr| \le \varepsilon^2,
\]
i.e., the approximation error is second-order in the low-error regime.
\end{proposition}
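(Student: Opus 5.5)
The plan is to subtract the exact expression from Proposition~\ref{prop:exact-composition} from the definition of $r_{\mathrm{new}}^{\mathrm{approx}}$. Both share the terms $r_{\mathrm{prev}}+(1-r_{\mathrm{prev}})p_{q,\sigma}^{\log,\mathrm{net}}$, so
\[
r_{\mathrm{new}}^{\mathrm{approx}}-r_{\mathrm{new}}^{\mathrm{exact}}
=\sum_{L\in\{\bar X,\bar Y,\bar Z\}}\mu_L^{\mathrm{prev}}\lambda_L^{(q,\sigma)} .
\]
The whole statement then reduces to bounding this cancellation sum.

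For the lower bound, every $\mu_L^{\mathrm{prev}}$ and $\lambda_L^{(q,\sigma)}$ is a probability and hence nonnegative. The sum is therefore nonnegative, which gives $r_{\mathrm{new}}^{\mathrm{approx}}\ge r_{\mathrm{new}}^{\mathrm{exact}}$.

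For the upper bound, I bound each factor $\lambda_L^{(q,\sigma)}$ by the total nontrivial mass $\lambda_{\bar X}^{(q,\sigma)}+\lambda_{\bar Y}^{(q,\sigma)}+\lambda_{\bar Z}^{(q,\sigma)}=p_{q,\sigma}^{\log,\mathrm{net}}$. This yields
\[
\sum_{L}\mu_L^{\mathrm{prev}}\lambda_L^{(q,\sigma)}\le p_{q,\sigma}^{\log,\mathrm{net}}\sum_L\mu_L^{\mathrm{prev}}=r_{\mathrm{prev}}\,p_{q,\sigma}^{\log,\mathrm{net}} .
\]
A slightly sharper bound is $\max_L\lambda_L\cdot r_{\mathrm{prev}}$, but the stated one is all that is needed.

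For the consequence, nonnegativity of the difference lets me drop the absolute value. Under $r_{\mathrm{prev}}\le\varepsilon$ and $p_{q,\sigma}^{\log,\mathrm{net}}\le\varepsilon$, the product is at most $\varepsilon^2$.

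I do not expect a real obstacle. The only delicate point is taking the identity-cancellation formula (``$AB=I$ iff $A=B$ modulo phase'') from Proposition~\ref{prop:exact-composition} as given, so that all the cross terms are accounted for exactly.
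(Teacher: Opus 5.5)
Your proposal is correct and follows essentially the same route as the paper. You recover the difference as the cancellation sum $\sum_{L\in\{\bar X,\bar Y,\bar Z\}}\mu_L^{\mathrm{prev}}\lambda_L^{(q,\sigma)}$ from Proposition~\ref{prop:exact-composition} and use nonnegativity for the lower bound; your termwise bound $\lambda_L^{(q,\sigma)}\le p_{q,\sigma}^{\log,\mathrm{net}}$ is only a trivial rephrasing of the paper's bound by the product of the two marginal sums, $r_{\mathrm{prev}}\,p_{q,\sigma}^{\log,\mathrm{net}}$.
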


\begin{proof}
By Proposition~\ref{prop:exact-composition},
\[
\begin{aligned}
r_{\mathrm{new}}^{\mathrm{approx}} - r_{\mathrm{new}}^{\mathrm{exact}}
= {} & \sum_{L\in\{\bar{X},\bar{Y},\bar{Z}\}}
\mu_L^{\mathrm{prev}}\lambda_L^{(q,\sigma)}.
\end{aligned}
\]
Since every term in the above sum is nonnegative, we have
\[
r_{\mathrm{new}}^{\mathrm{approx}} - r_{\mathrm{new}}^{\mathrm{exact}} \ge 0.
\]
Moreover,
\begin{align*}
r_{\mathrm{new}}^{\mathrm{approx}} - r_{\mathrm{new}}^{\mathrm{exact}}
&= \sum_{L\in\{\bar{X},\bar{Y},\bar{Z}\}} \mu_L^{\mathrm{prev}}\lambda_L^{(q,\sigma)} \\
&\le \Bigl(
\mu_{\bar{X}}^{\mathrm{prev}}
+\mu_{\bar{Y}}^{\mathrm{prev}}
+\mu_{\bar{Z}}^{\mathrm{prev}}
\Bigr) \\
&\quad \cdot \Bigl(
\lambda_{\bar{X}}^{(q,\sigma)}
+\lambda_{\bar{Y}}^{(q,\sigma)}
+\lambda_{\bar{Z}}^{(q,\sigma)}
\Bigr) \\
&= r_{\mathrm{prev}}\,p_{q,\sigma}^{\log,\mathrm{net}}.
\end{align*}
Therefore,
\[
0 \le r_{\mathrm{new}}^{\mathrm{approx}} - r_{\mathrm{new}}^{\mathrm{exact}} \le r_{\mathrm{prev}}\,p_{q,\sigma}^{\log,\mathrm{net}}.
\]
If both $r_{\mathrm{prev}}$ and $p_{q,\sigma}^{\log,\mathrm{net}}$ are at most $\varepsilon$, then the right-hand side is bounded by $\varepsilon^2$. This proves the claimed $O(\varepsilon^2)$ error bound.
\end{proof}

%
%
The update actually computed by Eqs.~\eqref{eq:p_comp}--\eqref{eq:r_new} is
\[
r_{\mathrm{new}}^{\mathrm{model}}
=r_{\mathrm{prev}}+(1-r_{\mathrm{prev}})p_{q,\sigma}^{\log}.
\]
Combining Eq.~\eqref{eq:upstream-logical-bound} with Proposition~\ref{prop:second-order} yields
\begin{equation}
\begin{aligned}
|r_{\mathrm{new}}^{\mathrm{model}}-r_{\mathrm{new}}^{\mathrm{exact}}|
&\le |r_{\mathrm{new}}^{\mathrm{model}}-r_{\mathrm{new}}^{\mathrm{approx}}|
 +|r_{\mathrm{new}}^{\mathrm{approx}}-r_{\mathrm{new}}^{\mathrm{exact}}|\\
&\le (1-r_{\mathrm{prev}})\frac{L_\sigma S_q^2}{2}
 +r_{\mathrm{prev}}p_{q,\sigma}^{\log,\mathrm{net}}.
\end{aligned}
\label{eq:complete-chain-bound}
\end{equation}
Under the independent-channel and depolarizing assumptions used above, the complete one-step error is $O(\varepsilon^2)$ when $S_q=O(\varepsilon)$, $r_{\mathrm{prev}}=O(\varepsilon)$, $p_{q,\sigma}^{\log,\mathrm{net}}=O(\varepsilon)$, and $L_\sigma=O(1)$. If these conditions hold over $m$ recovery segments, summing the one-step bounds gives the conservative accumulated bound $O(m\varepsilon^2)$. Therefore, under these low-error conditions, the first-order approximation chain used by the routing model has $O(\varepsilon^2)$ one-step error relative to exact logical-Pauli composition.

\begin{corollary}[Symmetric logical depolarizing special case]
\label{cor:symmetric-case}
If the pre-segment residual logical error and the segment-induced logical error are both symmetric over the three non-identity logical Pauli operators, i.e.,
\[
\mu_{\bar{X}}^{\mathrm{prev}}=\mu_{\bar{Y}}^{\mathrm{prev}}=\mu_{\bar{Z}}^{\mathrm{prev}}=\frac{r_{\mathrm{prev}}}{3},
\]
and
\[
\lambda_{\bar{X}}^{(q,\sigma)}=\lambda_{\bar{Y}}^{(q,\sigma)}=\lambda_{\bar{Z}}^{(q,\sigma)}=\frac{p_{q,\sigma}^{\log,\mathrm{net}}}{3},
\]
then
\[
r_{\mathrm{new}}^{\mathrm{exact}} = r_{\mathrm{prev}} + (1-r_{\mathrm{prev}})p_{q,\sigma}^{\log,\mathrm{net}} - \frac{r_{\mathrm{prev}}p_{q,\sigma}^{\log,\mathrm{net}}}{3}.
\]
Equivalently,
\[
r_{\mathrm{new}}^{\mathrm{exact}} = r_{\mathrm{prev}} + p_{q,\sigma}^{\log,\mathrm{net}} - \frac{4}{3}r_{\mathrm{prev}}p_{q,\sigma}^{\log,\mathrm{net}}.
\]
\end{corollary}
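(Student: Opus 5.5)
The corollary follows by substituting the symmetric distributions into the exact-composition formula of Proposition~\ref{prop:exact-composition}; no new probabilistic argument is needed. The plan is to take
\[
r_{\mathrm{new}}^{\mathrm{exact}} = r_{\mathrm{prev}} + (1-r_{\mathrm{prev}})p_{q,\sigma}^{\log,\mathrm{net}} - \sum_{L\in\{\bar{X},\bar{Y},\bar{Z}\}} \mu_L^{\mathrm{prev}}\lambda_L^{(q,\sigma)}
\]
and evaluate only the cancellation sum under the stated symmetry hypotheses.

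First, plug in $\mu_L^{\mathrm{prev}}=r_{\mathrm{prev}}/3$ and $\lambda_L^{(q,\sigma)}=p_{q,\sigma}^{\log,\mathrm{net}}/3$ for each of the three non-identity logical Paulis. Each of the three terms equals $r_{\mathrm{prev}}p_{q,\sigma}^{\log,\mathrm{net}}/9$, so the sum is $r_{\mathrm{prev}}p_{q,\sigma}^{\log,\mathrm{net}}/3$. This yields the first displayed identity. It is consistent with the hypotheses of the proposition, since the symmetric entries sum to $r_{\mathrm{prev}}$ and $p_{q,\sigma}^{\log,\mathrm{net}}$ respectively, matching the definitions $r_{\mathrm{prev}}=1-\mu_I^{\mathrm{prev}}$ and $p_{q,\sigma}^{\log,\mathrm{net}}=1-\lambda_I^{(q,\sigma)}$.

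Second, expand $(1-r_{\mathrm{prev}})p_{q,\sigma}^{\log,\mathrm{net}} = p_{q,\sigma}^{\log,\mathrm{net}} - r_{\mathrm{prev}}p_{q,\sigma}^{\log,\mathrm{net}}$ and combine the two product terms: $-1-\tfrac13=-\tfrac43$. This gives the equivalent form $r_{\mathrm{prev}}+p_{q,\sigma}^{\log,\mathrm{net}}-\tfrac43 r_{\mathrm{prev}}p_{q,\sigma}^{\log,\mathrm{net}}$.

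No step is a real obstacle. The only point to check is that the symmetry hypotheses are legitimate inputs to Proposition~\ref{prop:exact-composition}, which holds because that proposition is valid for arbitrary distributions $\mu^{\mathrm{prev}}$ and $\lambda^{(q,\sigma)}$ on $\mathcal{L}$. Optionally, one can note that the cancellation term $r_{\mathrm{prev}}p_{q,\sigma}^{\log,\mathrm{net}}/3$ lies within the bound of Proposition~\ref{prop:second-order}, as a sanity check.
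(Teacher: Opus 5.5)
Your proposal is correct and follows essentially the same route as the paper. The paper's proof likewise evaluates the cancellation sum as $3\cdot\frac{r_{\mathrm{prev}}}{3}\cdot\frac{p_{q,\sigma}^{\log,\mathrm{net}}}{3}=\frac{r_{\mathrm{prev}}p_{q,\sigma}^{\log,\mathrm{net}}}{3}$ and substitutes it into Proposition~\ref{prop:exact-composition}; your explicit expansion to the $\frac{4}{3}$ form simply spells out the last algebraic step.
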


\begin{proof}
Under the symmetry assumptions,
\[
\sum_{L\in\{\bar{X},\bar{Y},\bar{Z}\}} \mu_L^{\mathrm{prev}}\lambda_L^{(q,\sigma)} = 3\cdot \frac{r_{\mathrm{prev}}}{3}\cdot \frac{p_{q,\sigma}^{\log,\mathrm{net}}}{3} = \frac{r_{\mathrm{prev}}p_{q,\sigma}^{\log,\mathrm{net}}}{3}.
\]
Substituting this identity into the exact composition formula in Proposition~\ref{prop:exact-composition} yields the desired expression.
\end{proof}

\section{Proof of Theorem~\ref{th:NP-hard}}
\label{ap:proof-of-theorem-1}

In this appendix, we prove Theorem~\ref{th:NP-hard}. We first introduce the decision version of the problem and show that it is NP-complete. This immediately implies that the corresponding optimization problem is NP-hard.

\medskip
\textbf{Min-Cost Feasible Strategy Decision Problem (MCFSD).}

\textit{Input:} A network instance under our model, a source-destination pair $(s,d)$, a cost budget $B$, an end-to-end logical error threshold $r_\theta$, and the code/time parameters.

\textit{Question:} Does there exist a strategy $\pi=(P,D,\Sigma)$ such that
\begin{enumerate}
    \item $P$ is an $s$-$d$ simple path;
    \item the total strategy cost satisfies $C(\pi)\le B$;
    \item the final logical error rate satisfies $r(\pi)\le r_\theta$; and
    \item each segment delimited by two consecutive decoding operations satisfies the logical decoherence constraint?
\end{enumerate}

\begin{proof}[Proof of Theorem~\ref{th:NP-hard}]
%
%
We prove NP-hardness by exploiting the decision version \textsc{MCFSD}. Specifically, we show that \textsc{MCFSD} is NP-complete by a polynomial-time reduction from the Shortest Weight-Constrained Path Problem (SWCPP). Since a polynomial-time algorithm for the optimization problem would also solve \textsc{MCFSD}, the optimization problem is NP-hard.

\textbf{Step 1: \textsc{MCFSD} is in NP:}
A certificate is a strategy $\pi=(P,D,\Sigma)$. We can verify in polynomial time that:
\begin{itemize}
    \item $P$ is a simple $s$-$d$ path;
    \item $D$ is a valid set of decoding locations on $P$;
    \item $\Sigma$ assigns an available code scheme to each node in $D$;
    \item the total cost $C(\pi)$ is at most $B$;
    \item every segment satisfies the decoherence-time constraint; and
    \item the final logical error rate is at most $r_\theta$.
\end{itemize}
Indeed, all these checks require only a polynomial-time scan of the path and its segments, together with polynomial-time evaluation of the cost function, the segment-time constraints, and the logical-error update rule. Here we treat the code parameters, lookup tables, and the coefficients used by the logical-error model as either fixed constants of the instance or data that can be accessed in polynomial time. Therefore, \textsc{MCFSD} belongs to NP.

\textbf{Step 2: NP-hardness:}
We reduce from SWCPP, a known NP-complete problem~\cite{garey_computers_2009}. The description of the SWCPP is as follows.

\medskip
\noindent
\textbf{Shortest Weight-Constrained Path Problem~(SWCPP).}

\textit{Input:} A graph $G=(V,E)$, a source $s\in V$, a destination $d\in V$, an integer channel cost $c_e\ge 0$ for each $e\in E$, an integer channel weight $w_e\ge 0$ for each $e\in E$, and two integer budgets $B,W\ge 0$.

\textit{Question:} Does there exist an $s$-$d$ simple path $P$ such that
\[
\sum_{e\in P} c_e \le B \qquad\text{and}\qquad \sum_{e\in P} w_e \le W?
\]

\textbf{Reduction construction:} Given any SWCPP instance, we build an instance of \textsc{MCFSD} as follows.

\begin{enumerate}
    \item \textbf{Underlying graph.} Use the same graph $G=(V,E)$ with the same source $s$ and destination $d$.

    \item \textbf{Node types.} Let only $s$ and $d$ be super switches. Every node in $V\setminus\{s,d\}$ is set to be an ordinary switch. Thus, no intermediate node can perform decoding.

    \item \textbf{Code schemes.} Equip both $s$ and $d$ with exactly one code scheme $\sigma$. Set its decoding cost to
    \[
    c_\sigma = 0.
    \]

    \item \textbf{Channel costs.} For every channel $e\in E$, set its traversal cost in our model to be exactly the SWCPP cost:
    \[
    c'_e = c_e.
    \]

    \item \textbf{Propagation times.} For every channel $e\in E$, set its propagation time to be exactly the SWCPP weight:
    \[
    t_e = w_e.
    \]

    \item \textbf{Physical error rates.} For every channel $e\in E$, set
    \[
    p_e = 0.
    \]
    Hence, every composite physical error rate is zero, and therefore the final logical error rate is zero as well.

    \item \textbf{Logical decoherence time.} Set the logical decoherence time of the unique code scheme $\sigma$ to
    \[
    T_\sigma^{\log} = W+1.
    \]
\end{enumerate}

This construction is clearly computable in polynomial time.

\textbf{Key property of the construction:} Due to all intermediate nodes are ordinary switches, decoding can only occur at $d$ (with the source $s$ serving as the starting point of the transmission). 
Therefore, for any chosen $s$-$d$ path $P$, the entire path forms exactly one segment.

Thus, the segment-time constraint in our model becomes
\[
\sum_{e\in P} t_e < T_\sigma^{\log}.
\]
By construction, $t_e=w_e$ and $T_\sigma^{\log}=W+1$, so this is equivalent to
\[
\sum_{e\in P} w_e < W+1.
\]
Since all $w_e$ are integers, the above strict inequality holds if and only if
\[
\sum_{e\in P} w_e \le W.
\]

Moreover, since all $p_e=0$, the logical error rate along any path is zero. Hence, the end-to-end logical error constraint is automatically satisfied for any nonnegative threshold $r_\theta$ (for example, one may simply set $r_\theta=0$).

Finally, because the unique code scheme has zero decoding cost, the total strategy cost equals the path cost:
\[
C(\pi) = \sum_{e\in P} c'_e = \sum_{e\in P} c_e.
\]
Therefore,
\[
C(\pi)\le B \qquad\Longleftrightarrow\qquad \sum_{e\in P} c_e \le B.
\]

\textbf{Correctness of the reduction.} We prove both directions of the equivalence.

\medskip
\noindent
\textbf{(\(\Rightarrow\))} Suppose the SWCPP instance is a YES-instance. Then there exists an $s$-$d$ path $P$ such that
\[
\sum_{e\in P} c_e \le B \qquad\text{and}\qquad \sum_{e\in P} w_e \le W.
\]
Consider the corresponding strategy $\pi$ in the constructed \textsc{MCFSD} instance that follows the same path $P$. Since there is no intermediate super switch, the whole path is one segment. The segment-time constraint holds because
\[
\sum_{e\in P} t_e = \sum_{e\in P} w_e \le W < W+1 = T_\sigma^{\log}.
\]
The logical error constraint holds because all physical error rates are zero. The total strategy cost is
\[
C(\pi)=\sum_{e\in P} c_e \le B.
\]
Hence, the constructed \textsc{MCFSD} instance is also a YES-instance.

\medskip
\noindent
\textbf{(\(\Leftarrow\))} Conversely, suppose the constructed \textsc{MCFSD} instance is a YES-instance. Then there exists a feasible strategy $\pi$ with total cost at most $B$. Since only $s$ and $d$ are super switches, the strategy determines an $s$-$d$ path $P$ and the whole path is a single segment. Feasibility implies
\[
\sum_{e\in P} t_e < T_\sigma^{\log}=W+1.
\]
Using $t_e=w_e$ and the integrality of the weights, we obtain
\[
\sum_{e\in P} w_e \le W.
\]
Also,
\[
\sum_{e\in P} c_e = C(\pi)\le B.
\]
Therefore, $P$ is a feasible solution to the original SWCPP instance. So the SWCPP instance is a YES-instance.

\textbf{Conclusion.} We have shown that the SWCPP instance is a YES-instance if and only if the constructed \textsc{MCFSD} instance is a YES-instance. Thus, SWCPP polynomially reduces to \textsc{MCFSD}, so \textsc{MCFSD} is NP-hard. Combining this with the fact that \textsc{MCFSD} is in NP, we obtain that \textsc{MCFSD} is NP-complete.

Suppose, for contradiction, that a minimum-cost feasible strategy could be found in polynomial time. 
Then we could solve \textsc{MCFSD} in polynomial time by computing the minimum feasible cost and comparing it with the budget $B$. 
This contradicts the NP-completeness of \textsc{MCFSD}. Therefore, the optimization problem is NP-hard.
\end{proof}

\section{Proof of Theorem~\ref{th:mincost}}
\label{ap:proof-mincost}

\begin{proof}
Let $\hat{P}$ denote the path returned by Algorithm~\ref{alg:mincost} on $H=(V_H,E_H)$, and let $\hat r(\hat{P})$ and $\hat t(\hat{P})$ denote its discretized logical error rate and discretized segment time, respectively. Let $P^*$ be an optimal feasible path, and let $l^*$ be the label corresponding to $P^*$ at the destination.

First, we show that $c(\hat{P})\le c(P^*)$. During the label-correcting process, a label is removed only when it is dominated by another label with no larger cost, no larger logical error rate, and no larger elapsed segment time. Hence, pruning a dominated label cannot remove all minimum-cost feasible continuations. Therefore, the label corresponding to $P^*$ either remains in $\mathcal{L}_d$ or is dominated by another label in $\mathcal{L}_d$ with cost no larger than $c(P^*)$. Since Algorithm~\ref{alg:mincost} returns the minimum-cost label in $\mathcal{L}_d$, we have $c(\hat{P})\le c(P^*)$.

Next, we bound the error-rate violation introduced by error-rate discretization. Suppose $\Delta_r\le \epsilon_r r_\theta/|V_H|$. Since the logical error rate is rounded down with step size $\Delta_r$, each discretization step may underestimate the true error rate by at most $\Delta_r$. In the worst case, such rounding occurs at most $|V_H|$ times along a simple path in the auxiliary graph. Thus,
\[
r(\hat{P})\le \hat r(\hat{P})+\Delta_r|V_H|.
\]
Because Algorithm~\ref{alg:mincost} accepts the output label only when $\hat r(\hat{P})\le r_\theta$, we obtain
\[
r(\hat{P})\le r_\theta+\Delta_r|V_H|\le (1+\epsilon_r)r_\theta.
\]

Finally, we bound the violation of the logical decoherence-time constraint. Suppose $\Delta_t\le \epsilon_t T^{\text{log}}/|V_H|$. Since elapsed time is rounded down with step size $\Delta_t$, each discretization step may underestimate the true elapsed time by at most $\Delta_t$. Along any segment, the accumulated underestimation is therefore at most $\Delta_t|V_H|$. Hence, for any segment $q$ of $\hat{P}$,
\[
t^{\text{seg}}_q(\hat{P})\le \hat{t}^{\text{seg}}_q(\hat{P})+\Delta_t|V_H|.
\]
Because the algorithm accepts a segment only when $\hat{t}^{\text{seg}}_q(\hat{P})<T^{\text{log}}_q$, we have
\[
t^{\text{seg}}_q(\hat{P})<T^{\text{log}}_q+\Delta_t|V_H|\le (1+\epsilon_t)T^{\text{log}}_q.
\]
Therefore, $\hat{P}$ is an $(\epsilon_r,\epsilon_t)$-optimal feasible path.
\end{proof}

\section{Proof of the Multi-Flow Approximation Guarantee}
\label{ap:multi-flow-proof}

In this appendix, we prove the approximation guarantee of the randomized rounding policy used in Section~\ref{sec:multi-flows}. Recall that $\mathcal{K}$ is the set of flows, $\Pi_i$ is the fixed candidate strategy pool of flow $i$, and $a_i$ is the throughput demand of flow $i$. We assume that the throughput demands are normalized so that $0\le a_i\le 1$. Let $y_{\mathrm{OPT}}^{\Pi}$ denote the optimal value of the candidate-restricted integral path-allocation problem, and let $y^*$ denote the optimal value of its LP relaxation. Clearly, $y^*\le y_{\mathrm{OPT}}^{\Pi}$.

The randomized rounding policy independently selects one strategy $\pi\in\Pi_i$ for each flow $i$ with probability $x^*_{i,\pi}$, where $\{x^*_{i,\pi}\}$ is an optimal solution to the LP relaxation. For a physical channel $e\in E$, let $X_e$ denote the total rounded congestion on channel $e$. Then
\[
X_e=\sum_{i\in\mathcal{K}}\sum_{\pi\in\Pi_i:\,e\in E(\pi)} a_i X_{i,\pi},
\]
where $X_{i,\pi}$ is the indicator variable that flow $i$ selects strategy $\pi$. Since each flow is rounded independently, the random variables contributed by different flows are independent. Moreover, each contribution is bounded by $1$ due to the normalization $0\le a_i\le 1$.

The expected congestion on channel $e$ is
\[
\mathbb{E}[X_e]=\sum_{i\in\mathcal{K}}\sum_{\pi\in\Pi_i:\,e\in E(\pi)} a_i x^*_{i,\pi}\le y^*\le y_{\mathrm{OPT}}^{\Pi}.
\]
Let $\mu_e=\mathbb{E}[X_e]$. For any $\lambda>1$, by the Chernoff bound for independent bounded random variables,
\[
\Pr[X_e\ge \lambda y_{\mathrm{OPT}}^{\Pi}]
\le
\left(\frac{e\mu_e}{\lambda y_{\mathrm{OPT}}^{\Pi}}\right)^{\lambda y_{\mathrm{OPT}}^{\Pi}}.
\]
Since $\mu_e\le y_{\mathrm{OPT}}^{\Pi}$, we have
\[
\Pr[X_e\ge \lambda y_{\mathrm{OPT}}^{\Pi}]
\le
\left(\frac{e}{\lambda}\right)^{\lambda y_{\mathrm{OPT}}^{\Pi}}.
\]
Assuming $y_{\mathrm{OPT}}^{\Pi}\ge 1$, this further implies
\[
\Pr[X_e\ge \lambda y_{\mathrm{OPT}}^{\Pi}]
\le
\left(\frac{e}{\lambda}\right)^\lambda.
\]

Set $\lambda=\kappa\log n/\log\log n$ for a sufficiently large constant $\kappa$, where $n=|V|$. Then
\[
\left(\frac{e}{\lambda}\right)^\lambda
=
\exp\!\left(-\lambda\log(\lambda/e)\right)
\le
\frac{1}{n^3},
\]
for sufficiently large $n$ and a suitable constant $\kappa$. Therefore, for any fixed channel $e$,
\[
\Pr[X_e\ge \lambda y_{\mathrm{OPT}}^{\Pi}]\le \frac{1}{n^3}.
\]

Finally, by the union bound over all physical channels, and using $|E|\le n^2$, we obtain
\[
\Pr\left[\exists e\in E:\ X_e\ge \lambda y_{\mathrm{OPT}}^{\Pi}\right]
\le
\frac{|E|}{n^3}
\le
\frac{1}{n}.
\]
%
%
Thus, with probability at least $1-1/n$, every channel has congestion at most $\lambda y_{\mathrm{OPT}}^{\Pi}=O(\log n/\log\log n)\cdot y_{\mathrm{OPT}}^{\Pi}$. Hence, randomized rounding achieves the stated approximation ratio for the fixed candidate pools with high probability.

\fi
\end{document}